\definecolor{urlcolor}{rgb}{0,0,0.7}
 \definecolor{BLACK}{gray}{0}
 \definecolor{WHITE}{gray}{1}
 \definecolor{RED}{rgb}{1,0,0}
 \definecolor{GREEN}{rgb}{0,1,0}
 \definecolor{BLUE}{rgb}{0,0,1}
 \definecolor{CYAN}{cmyk}{1,0,0,0}
 \definecolor{MAGENTA}{cmyk}{0,1,0,0}
 \definecolor{YELLOW}{cmyk}{0,0,1,0}
\newenvironment{proof}[1][\protect\proofname]{\par
\normalfont\topsep6\p@\@plus6\p@\relax
\trivlist
\itemindent\parindent
\item[\hskip\labelsep
\scshape
#1]\ignorespaces
}{%
\endtrivlist\@endpefalse
}
\providecommand{\proofname}{Proof}
\theoremstyle{plain}
\newtheorem{theorem}{\protect\theoremname}
\theoremstyle{plain}
\newtheorem{lemma}[theorem]{\protect\lemmaname}
\theoremstyle{plain}
\theoremstyle{plain}
\theoremstyle{plain}
\newtheorem{corollary}[theorem]{\protect\corollaryname}
\theoremstyle{plain}
\newtheorem{thm}{\protect\theoremname}
\theoremstyle{plain}
\newtheorem{lem}[thm]{\protect\lemmaname}
\theoremstyle{plain}
\newtheorem{cor}[thm]{\protect\corollaryname}
\theoremstyle{plain}
\theoremstyle{plain}
\theoremstyle{definition}
\newtheorem{remark}{\protect\remarkname}
\newtheorem{example}{\protect\examplename}
\newtheorem{thm1}{Theorem}
\newtheorem{lem1}[thm1]{Lemma}
\providecommand{\theoremname}{Theorem}
\providecommand{\lemmaname}{Lemma}
\providecommand{\propositionname}{Proposition}
\providecommand{\definitionname}{Definition}
\providecommand{\corollaryname}{Corollary}
\providecommand{\remarkname}{Remark}
\providecommand{\examplename}{Example}
\newcommand{\mbb}{\mathbb}
\newcommand{\mc}{\mathcal}
\newcommand{\norm}[1]{\left|\left|#1\right|\right|}
\newcommand{\op}[2]{|#1\rangle \langle #2|}
\newcommand{\ip}[2]{\langle #1|#2\rangle}
\newcommand{\proj}[1]{| #1 \rangle\!\langle #1 |}
\newcommand{\tr}{\mathrm{tr}}
\newcommand{\pr}{\mathrm{Pr}}
\newcommand{\ovl}[1]{\overline{#1}}
\begin{document}

\title{Relating the Resource Theories of Entanglement and Quantum Coherence}

\author{Eric Chitambar}

\affiliation{Department of Physics and Astronomy, Southern Illinois University,
Carbondale, Illinois 62901, USA}

\author{Min-Hsiu Hsieh}

\affiliation{Centre for Quantum Computation \& Intelligent Systems (QCIS), Faculty of Engineering and Information Technology (FEIT), University of Technology Sydney (UTS), NSW 2007, Australia}

\begin{abstract}
Quantum coherence and quantum entanglement represent two fundamental features of non-classical systems that can each be characterized within an operational resource theory.  In this paper, we unify the resource theories of entanglement and coherence by studying their combined behavior in the operational setting of local incoherent operations and classical communication (LIOCC).  Specifically we analyze the coherence and entanglement trade-offs in the tasks of state formation and resource distillation.  For pure states we identify the minimum coherence-entanglement resources needed to generate a given state, and we introduce a new LIOCC monotone that completely characterizes a state's optimal rate of bipartite coherence distillation.  This result allows us to precisely quantify the difference in operational powers between global incoherent operations, LIOCC, and local incoherent operations \textit{without} classical communication.   Finally, a bipartite mixed state is shown to have distillable entanglement if and only if entanglement can be distilled by LIOCC, and we strengthen the well-known Horodecki criterion for distillability.  

\end{abstract}

\maketitle

The ability for quantum systems to exist in ``superposition states'' reveals the wave-like nature of matter and represents a strong departure from classical physics.  Systems in such superposition states are often said to possess quantum coherence.  
There has currently been much interest in constructing a resource theory of quantum coherence \cite{Aberg-2006a, Levi-2014a, Baumgratz-2014a, Bromley-2015a, Korzekwa-2015a, Yuan-2015a, Winter-2015a, Singh-2015a, Marvian-2015a, Streltsov-2015c, Yadin-2015a}, in part because of recent experimental and numerical findings that suggest quantum coherence alone can enhance or impact physical dynamics in biology \cite{Lloyd-2011a, Li-2012a, Huelga-2013a, Lambert-2013a}, transport theory \cite{Rebentrost-2009a, Bjorn-2013a, Levi-2014a}, and thermodynamics \cite{Lostaglio-2015a, Narasimhachar-2015a}.  

In a standard resource-theoretic treatment of quantum coherence, the free (or ``incoherent'') states are those that are diagonal in some fixed reference (or ``incoherent'') basis 
Different classes of allowed (or ``incoherent'') operations have been proposed in the literature \cite{Aberg-2006a, Baumgratz-2014a, Marvian-2015a, Streltsov-2015c, Yadin-2015a} (see also \cite{Chitambar-2016a, Marvian-2016a} for comparative studies of these approaches), however an essential requirement is that the incoherent operations act invariantly on the set of diagonal density matrices.  Incoherent operations can then be seen as one of the most basic generalizations of classical operations (i.e.~stochastic maps) since their action on diagonal states can always be simulated by classical processing.  Note also that most experimental setups will have a natural basis to work in, and arbitrary unitary time evolutions might be physically difficult to implement.  In these settings, there are practical advantages to identifying ``diagonal preserving'' operations as being ``free'' relative to coherent-generating ones. 

One does not need to look far to find an important connection between incoherent operations and quantum entanglement, the latter being one of the most important resources in quantum information processing \cite{Horodecki-2009a}.  Consider the task of entanglement generation.  This procedure is usually modeled by bringing together two or more quantum systems initially in a product state $\rho\otimes\sigma$ and then applying an entangling joint operation.  However, using only incoherent operations, this will not be possible unless either $\rho$ or $\sigma$ already possesses coherence.   The reason is that when $\rho\otimes\sigma$ is an incoherent bipartite state, any incoherent operation acting on both systems will leave the joint state incoherent (and hence unentangled).  On the other hand, if the joint state is $\ket{+}\ket{0}$, with $\ket{\pm}=\sqrt{1/2}(\ket{0}\pm\ket{1})$, then an application of CNOT yields the entangled state $\sqrt{1/2}(\ket{00}+\ket{11})$.  This example reveals that coherence, or at least coherent-generating operations, is a pre-requisite for producing entanglement.  In fact, as Streltsov \textit{et al.}~have shown \cite{Streltsov-2015a}, every coherent state can be used for the generation of entanglement in a manner similar to this example.

Notice that the transformation $\ket{+}\ket{0}\to\sqrt{1/2}(\ket{00}+\ket{11})$ requires performing an entanglement-generating incoherent operation.  
To capture both coherence and entanglement in a common resource-theoretic framework, one must modify the scenario by adopting the ``distant lab'' perspective in which two or more parties share a quantum system but they are spatially separated from one another \cite{Plenio-2007a, Horodecki-2009a}.  In this setting, entanglement cannot be generated between the parties and it becomes another resource in play.  When the constraint of locality is added to the incoherent framework, the allowable operations for Alice and Bob are then \textit{local incoherent operations and classical communication} (LIOCC).  The hybrid coherence-entanglement theory described here is similar in spirit to previous work on the locality-restricted resource theories of purity \cite{Oppenheim-2002a, Horodecki-2003b, Oppenheim-2003a, Horodecki-2005a} and asymmetry \footnote{The resource theory of asymmetry \cite{Bartlett-2007a, Gour-2008a, Gour-2009a, Marvian-2013a, Marvian-2014a} under locality constraints has been studied in Refs. \cite{Vaccaro-2008a, White-2009a}.  Note that the precise connection between coherence and asymmetry is rather subtle, since in the latter one can allow for decoherence-free subspaces when taking tensor products.  For instance, if $U(1)$ is the local symmetry for Alice, then when considering two copies of her system, the coherence-generating operation $\ket{01}^{A_1A_2}\to 1/\sqrt{2}(\ket{01}^{A_1A_2}+\ket{10}^{A_1A_2})$ is allowed.  See \cite{Chitambar-2016a} and \cite{Marvian-2016a} for more details.}.  We admittedly do not point to a specific biological or thermodynamic process as motivation for studying LIOCC -- although, one could envision potential physical applications in certain coherence-enhanced transport networks where the nodes interact through classical signaling.  Rather, we promote LIOCC as the natural setting to explore the interplay between coherence and entanglement as resource primitives in quantum information theory.  For example, how much local coherence and shared entanglement do Alice ($A$) and Bob ($B$) need to prepare a particular bipartite state $\rho^{AB}$ using LIOCC (Fig.~\ref{Fig:formation-distillation} (a))?  Conversely, how much coherence and entanglement can be distilled from a given state $\rho^{AB}$ using LIOCC (Fig.~\ref{Fig:formation-distillation} (b))?  The latter task can also be seen as type of collaboative randomness distillation, where Alice and Bob work together to generate local sources of genuine randomness for each other \cite{Yuan-2015a}.

\begin{figure}[t]
\includegraphics[scale=.70]{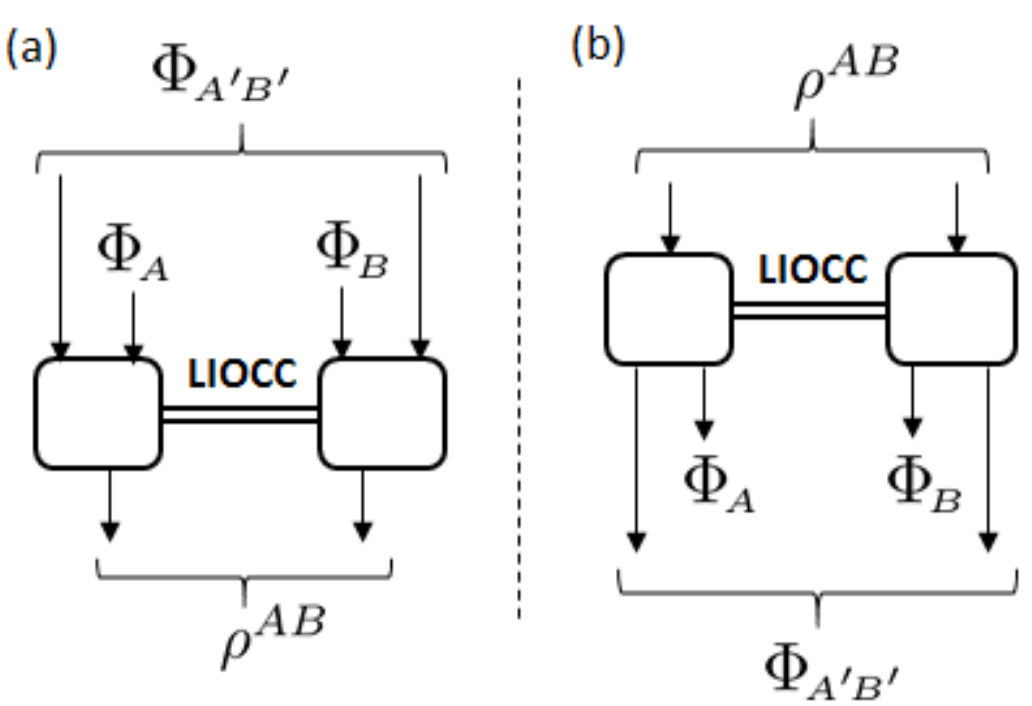}
\caption{\label{Fig:formation-distillation} (a) An LIOCC formation protocol asymptotically generates an arbitrary state $\rho^{AB}$ from an initial supply of local coherent bits ($\Phi_A/\Phi_B$) and shared entanglement bits ($\Phi_{A'B'}$).   (b) An LIOCC dilution protocol performs the reverse transformation.}
\end{figure}

Our main results are the following.  (1) We completely characterize the achievable coherence-entanglement rate region for the task of asymptotically generating some pure state $\ket{\Psi}^{AB}$ (Theorem \ref{thm1}).  (2) We introduce a new LIOCC monotone that combines both coherence and entanglement measures (Theorem \ref{thm4}), and we show it quantifies the optimal rate in which Alice and Bob can simultaneously distill local coherence from a pure state.  (3)  We identify an achievable rate region for the coherence-entanglement distillation of a pure state and show optimality at almost all corner points (Theorem \ref{thm5}).  
(4)  In analogy to Refs.~\cite{Oppenheim-2002a, Horodecki-2003b, Oppenheim-2003a, Horodecki-2005a}, we introduce and compute for pure states the nonlocal coherence deficit and the LIOCC coherence deficit (Eqns.~\eqref{Eq:nonlocal-deficit}--\eqref{Eq:LIOCC-deficit}).  (5)  We show that LIOCC operations alone are sufficient to decide whether entanglement can be distilled from a mixed state using general LOCC.

Let us begin by briefly describing the theory of bipartite coherence in more detail.  Assigned to both Alice and Bob's system is a particular basis called their incoherent basis.  We denote Alice's incoherent basis by $\{\ket{x}^A\}_{x=0}^{d_A-1}$ and Bob's incoherent basis by $\{\ket{y}^B\}_{y=0}^{d_B-1}$ so that the incoherent basis for their joint system $\mc{H}^A\otimes\mc{H}^B$ is $\{\ket{x}^A\ket{y}^B\}_{x,y=0}^{d_A-1,d_B-1}$.  Then any bipartite state belongs to the set of incoherent states $\mc{I}$ iff it has the form
\begin{equation}
\label{Eq:Incoherent}
\sigma^{AB}=\sum_{xy}p_{xy}\op{x}{x}^A\otimes\op{y}{y}^B.
\end{equation}
Following the framework of Baumgratz \textit{et al.}~\cite{Baumgratz-2014a}, a local incoherent operation for Alice is given by a complete set of Kraus operators $\{K_\alpha\}_{\alpha}$ such that $(K_\alpha\otimes\mbb{I}^B)\rho^{AB}(K_\alpha\otimes\mbb{I}^B)^\dagger/tr[K_\alpha K_\alpha^\dagger\otimes\mbb{I}^B\rho^{AB}]\in\mc{I}$ for all $\rho^{AB}\in\mc{I}$.  If ever she introduces a local ancilla system $\mc{H}^{A'}$, the incoherent basis for this additional system is labeled in the same way $\{\ket{x}^{A'}\}_{x=0}^{d_{A'}-1}$.  Analogous statements characterize the notion of incoherent operations on Bob's system.  In the LIOCC setting, Alice and Bob take turns performing local incoherent operations and sharing their measurement data over a classical communication channel.  

The canonical resource states in the bipartite LIOCC framework are the maximally coherent bits (CoBits), $\ket{\Phi_A}:=\sqrt{1/2}(\ket{0}^A+\ket{1}^A)$ and $\ket{\Phi_B}:=\sqrt{1/2}(\ket{0}^B+\ket{1}^B)$ for Alice and Bob's systems respectively \cite{Baumgratz-2014a}, as well as the entangled state $\ket{\Phi_{AB}}:=\sqrt{1/2}(\ket{00}+\ket{11})$, which we will call the maximally coherent entangled bit (eCoBit).  Notice that unlike entanglement theory, only those bipartite states related to $\ket{\Phi_{AB}}$ by an incoherent local unitary transformation can be regarded as equivalent to $\ket{\Phi_{AB}}$.  For example, as we will see below, one eCoBit cannot be incoherently transformed into the state $\sqrt{1/2}(\ket{0+}+\ket{1-})$, even asymptotically.

We now describe the primary tasks studied in this paper, which can be seen as the resource-theoretic tasks recently analyzed by Winter and Yang in Ref.~\cite{Winter-2015a} but now with additional locality constraints.  All of the detailed proofs can be found in the Supplemental Material, and here we just present the results.  Let us begin with the problem of asymptotic state formation shown in Fig.~\ref{Fig:formation-distillation} (a).  A triple $(R_A,R_B,E^{co})$ is an achievable \textit{coherence-entanglement formation triple} for the state $\rho^{AB}$ if for every $\epsilon>0$ there exists an LIOCC operation $\mc{L}$ and integer $n$ such that
\begin{equation}
\mc{L}\left(\Phi_A^{\otimes\lceil n(R_A+\epsilon)\rceil}\otimes\Phi_B^{\otimes\lceil n(R_B+\epsilon)\rceil}\otimes\Phi_{A'B'}^{\otimes\lceil n(E^{co}+\epsilon)\rceil}\right)\overset{\epsilon}{\approx}\rho^{\otimes n}.\notag
\end{equation}
Dual to the task of formation is resource distillation, as depicted in Fig.~\ref{Fig:formation-distillation} (b).  A triple $(R_A,R_B,E^{co})$ is an achievable \textit{coherence-entanglement distillation triple} for $\rho^{AB}$ if for every $\epsilon>0$ there exists an LIOCC operation $\mc{L}$ and integer $n$ such that
\begin{equation} 
\mc{L}(\rho^{\otimes n})\overset{\epsilon}{\approx}\Phi_A^{\otimes\lfloor n(R_A-\epsilon)\rfloor}\otimes\Phi_B^{\otimes\lfloor n(R_B-\epsilon)\rfloor}\otimes\Phi_{AB}^{\otimes\lfloor n(E^{co}-\epsilon)\rfloor}.\notag\
\end{equation}
As we are dealing with asymptotic transformations, we should expect the optimal rate triples to be given by entropic quantities.  Recall that for a bipartite state $\omega^{AB}$, the von Neumann entropy of, say, Alice's reduced state $\omega^{A}$ is given by $S(A)_{\omega}=-\tr[\omega^A\log\omega^A]$.  The quantum mutual information of $\omega^{AB}$ takes the form $I(A:B)_\omega:=S(A)_\omega-S(A|B)_{\omega}$, where $S(A|B)_{\omega}:=S(AB)_{\omega}-S(B)_{\omega}$.  For a pure state $\ket{\Psi}^{AB}$, the entropy of entanglement $\mathrm{E}(\Psi):=S(A)_{\Psi}=S(B)_{\Psi}$ is the unique measure of entanglement in the asymptotic regime \cite{Popescu-1997a}, and it can be generalized to mixed states as the entanglement of formation $\mathrm{E}_F(\rho)$ \cite{Bennett-1996a}.  We will also be interested in these entropic quantities after sending our state $\omega^{AB}$ through the completely dephasing channel, $\Delta(\omega):=\sum_{xy}\op{xy}{xy}\omega\op{xy}{xy}$.  It will be convenient to think of $\Delta(\omega)$ as encoding random variables $XY$ having joint distribution $p(x,y)=\bra{xy}\Delta(\omega)\ket{xy}$.  For this reason, we follow standard convention and replace the labels $(A,B)\to (X,Y)$ when discussing a dephased state.   

Our first main result completely characterizes the achievable rate region for the LIOCC formation of bipartite pure states.
\begin{theorem}
\label{thm1}
For a pure state $\ket{\Psi}^{AB}$ the following triples are achievable coherence-entanglement formation rates
\begin{align}
(R_A,R_B,E^{co})&=\left(\;0,\;S(Y|X)_{\Delta(\Psi)}\;,\; S(X)_{\Delta(\Psi)}\right)\label{Eq:CostTriple1}\\
(R_A,R_B,E^{co})&=\left(S(X)_{\Delta(\Psi)},\;S(Y|X)_{\Delta(\Psi)},\; \mathrm{E}(\Psi)\right)\label{Eq:CostTriple2}\\
(R_A,R_B,E^{co})&=\left(0,\;0,\; S(XY)_{\Delta(\Psi)}\right)\label{Eq:CostTriple3}
\end{align}
as well as the points obtained by interchanging $A\leftrightarrow B$ in Eqns.~\eqref{Eq:CostTriple1} -- \eqref{Eq:CostTriple3}.  Moreover, these points are optimal in the sense that any achievable rate triple must satisfy (i) $E^{co}\geq \mathrm{E}(\Psi)$, (ii) $R_A+R_B\geq S(XY)_{\Delta(\Psi)}$, (iii) $R_B+E^{co}\geq S(XY)_{\Delta(\Psi)}$.
\end{theorem}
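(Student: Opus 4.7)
The plan is to establish achievability and the converse separately. For achievability I would construct three explicit LIOCC protocols, one per rate triple, exploiting the flexibility of Baumgratz's incoherent-Kraus framework; for the converse I would apply three monotonicity bounds using, respectively, an entanglement monotone, a global coherence monotone, and a quantum-incoherent relative entropy monotone.

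A key technical tool for achievability is the following LIOCC conversion. An incoherent Kraus set such as $\{|0\rangle\langle+|,|1\rangle\langle-|\}$ is permissible under Baumgratz's definition because both operators map each computational basis vector to a scaled computational basis vector; applying this set to Bob's half of $\Phi_{AB}$, together with a classically-conditioned $Z$-correction by Alice, deterministically yields $\Phi_A\otimes|0\rangle^B$. This trades eCoBits for local CoBits within LIOCC and underlies all three protocols. For triple (1), I would first use the $nS(X)_{\Delta(\Psi)}$ eCoBits to prepare, up to vanishing trace error, the state $\sum_{x^n\in T_X^n}\sqrt{p(x^n)}|x^n\rangle^A|x^n\rangle^B$ by applying local incoherent relabeling isometries $|k\rangle\mapsto|x^n_k\rangle$ on both sides and invoking the asymptotic equipartition property to round $1/\sqrt{|T_X^n|}$ to $\sqrt{p(x^n)}$. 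Bob then performs a coherently-controlled Winter--Yang dilution using his $nS(Y|X)_{\Delta(\Psi)}$ CoBits, with Kraus operators of the form $K_\alpha=\sum_{x^n}|x^n\rangle\langle x^n|^B\otimes K_\alpha^{x^n,B'}$, conditionally preparing $|\phi_{x_i}\rangle:=\sum_y\sqrt{p(y|x_i)}e^{i\theta_{x_iy}}|y\rangle$ at each site $i$; assembly yields a state $\epsilon$-close to $|\Psi\rangle^{\otimes n}$, with the chain rule $S(XY)=S(X)+S(Y|X)$ certifying the Bob-side coherence budget. Triple (3) then follows from triple (1) by replacing Bob's CoBits with additional eCoBits and invoking the eCoBit--CoBit conversion above, while triple (2) is obtained by entanglement-diluting $n\mathrm{E}(\Psi)$ eCoBits into the Schmidt-basis version of $|\Psi\rangle^{\otimes n}$ and rotating each party's Schmidt basis into its incoherent basis at a cost of $S(X)$ Alice CoBits and $S(Y|X)$ Bob CoBits (the saving over a naive $S(X)+S(Y)$ budget coming from classically communicating Alice's typical outcome to Bob).

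For the converse, bound (i) follows from the asymptotic-LIOCC monotonicity and additivity of the entanglement of formation $\mathrm{E}_F$, which vanishes on CoBits and equals one on eCoBits. For (ii) and (iii) I would use two coherence-type monotones: the relative entropy of coherence $C_r$, which is LIOCC-monotone and equals one on each of $\Phi_A$, $\Phi_B$, $\Phi_{AB}$; and the quantum-incoherent relative entropy $C_r^{A|B}(\rho):=S(\Delta^B(\rho))-S(\rho)$, where $\Delta^B$ denotes dephasing on Bob's system only, which is LIOCC-monotone, vanishes on Alice's CoBits, and equals one on Bob's CoBits and eCoBits. Direct applications yield $R_A+R_B+E^{co}\geq S(XY)_{\Delta(\Psi)}$ and $R_B+E^{co}\geq S(\Delta^B(\Psi))$; combining each with bound (i) and carefully accounting for the fact that an eCoBit's coherence is tied up in its entanglement content sharpens the right-hand sides to the stated $S(XY)_{\Delta(\Psi)}$.

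The most delicate step in the achievability is the second stage of the protocol for triple (1): the controlled Winter--Yang dilution is not a unitary because the $|\phi_{x_i}\rangle$ are generally non-orthogonal across distinct $x_i$, so one must verify by a careful error analysis that this map preserves Alice's coherent $x^n$-correlation within $\epsilon$ and that the required Kraus operators are incoherent in the Baumgratz sense. In the converse direction, the main obstacle is the combined-monotone upgrade in (ii) and (iii): the naive single-monotone inequalities do not immediately yield $S(XY)_{\Delta(\Psi)}$ on the right-hand side, so one must invoke bound (i) together with the Schmidt structure of $|\Psi\rangle$ to rule out any ``double-counting'' of the eCoBit's coherence contribution against its entanglement content.
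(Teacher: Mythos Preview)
Your overall architecture (three protocols for achievability, three monotones for the converse) matches the paper's, and several ingredients---the eCoBit$\to$CoBit conversion via $\{|0\rangle\langle+|,|1\rangle\langle-|\}$, deriving triple~\eqref{Eq:CostTriple3} from triple~\eqref{Eq:CostTriple1}, and the use of $\mathrm{E}_F$, $C_r$, $C_r^{A|B}$ for the converse---are exactly what the paper uses. But the achievability protocols differ substantially, and your version of the protocol for triple~\eqref{Eq:CostTriple1} contains a genuine gap at precisely the point you flag as ``most delicate.''

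The issue with your controlled Winter--Yang dilution is this: you need Bob to coherently implement $|x^n\rangle^B|+\rangle^{\otimes m}\mapsto|x^n\rangle^B|\phi_{x^n}\rangle^{B'}$ with $m\approx nS(Y|X)$. For each fixed $x^n$ the target $|\phi_{x^n}\rangle$ is (approximately) supported on a conditionally-typical set of size $2^{nS(Y|X)}$, but these sets \emph{vary} with $x^n$ and collectively span a space of size $2^{nS(Y)}$. Writing block-diagonal Kraus operators $K_\alpha=\sum_{x^n}|x^n\rangle\langle x^n|\otimes K_\alpha^{x^n}$ does not help: either the per-branch norms $\|K_\alpha^{x^n}|+\rangle^{\otimes m}\|$ depend on $x^n$ (decohering Alice's register), or you fall back on the controlled-unitary implementation of Lemma~\ref{lem2}/Remark~\ref{remark:unitary1}, whose CoBit cost equals the log-dimension of the \emph{target system}, not $nS(Y|X)$. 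The paper resolves this by exploiting the type-class structure: within a fixed type $p_t$, every $|\phi_{x^n}\rangle$ is a \emph{permutation} of one fiducial state $|\psi_{tl_0c_0}\rangle$. Bob therefore prepares this single state \emph{uncontrolled} (standard dilution at rate $S(Y|X)$), then applies the controlled permutation $\Pi_{tlc}$, which is an incoherent unitary and hence free. This permutation trick is the key idea you are missing. You also omit the subsequent decoupling step: Bob still holds his control registers and must disentangle them via a Fourier-type incoherent measurement, with Alice applying a conditional phase correction (Eq.~\eqref{Eq:AliceErrorCorrect} in the paper).

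For triple~\eqref{Eq:CostTriple2} the paper takes a completely different route, based on a covering/obfuscation decomposition and Uhlmann's theorem (decomposition~\eqref{Eq:decomp1a}): Bob again prepares a single $|\psi_{tm_0s}\rangle$ and decouples, while Alice's $S(X)$ CoBits are spent implementing certain Uhlmann unitaries $U_{tm}$ and creating the $|m\rangle$ superposition. Your ``Schmidt-dilute then locally rotate'' idea is plausible, but the claim that Bob's rotation costs only $S(Y|X)$ rather than $S(Y)$ via ``classically communicating Alice's typical outcome'' is not justified---Alice cannot measure without destroying the coherence you need, and without that step you have not shown how to beat the naive $S(Y)$ cost.
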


For a mixed state $\rho^{AB}$, a formation protocol can be constructed that achieves the average rates for any ensemble $\{p_k,\ket{\varphi_k}^{AB}\}$ such that $\rho=\sum_kp_k\op{\varphi_k}{\varphi_k}$ \cite{Bennett-1996a}.  
For instance, one can consider an ensemble whose average bipartite coherence attains the coherence of formation $C_F$ for $\rho$; i.e. it is an ensemble $\{p_k,\ket{\varphi_k}^{AB}\}$ for $\rho$ that minimizes $\sum_k p_k S(XY)_{\Delta(\varphi_k)}$ \cite{Yuan-2015a, Winter-2015a}.  Then for a mixed state $\rho$, the coherence rate sum $R_A+R_B$ of Eq.~\eqref{Eq:CostTriple2} can attain the coherence of formation $C_F(\rho)$.  In the global setting where Alice and Bob are allowed to perform joint operations across system $AB$, it has been shown that $C_F(\rho)$ quantifies the optimal coherence consumption rate for generating $\rho$ using global incoherent operations \cite{Winter-2015a}.  Our result then intuitively says that in the restricted LIOCC setting, the same coherence rate is sufficient to generate $\rho$, however they now need additional entanglement at a rate $\sum_k p_k \mathrm{E}(\varphi_k)$, where the ensemble $\{p_k,\ket{\varphi_k}^{AB}\}$ minimizes the average coherence of $\rho$.

The proof of Theorem \ref{thm1} uses two lemmas that may be of independent interest.  The first generalizes a result presented in Ref.~\cite{Baumgratz-2014a}, and the second is an incoherent version of Nielsen's Majorization Theorem \cite{Nielsen-1999a}.
\begin{lemma}
\label{lem2}
An arbitrary $d\times d$ unitary operator $U$ can be performed on a system using incoherent operations and $\lceil \log d\rceil$ CoBits.
\end{lemma}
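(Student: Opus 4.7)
The plan is to give an explicit construction.  First, combine the $\lceil\log d\rceil$ CoBits into a single ancilla $R$ of dimension $D=2^{\lceil\log d\rceil}$, on which they jointly realize the maximally coherent state $\ket{\Phi_D}^R=\ket{+}^{\otimes\lceil\log d\rceil}=\tfrac{1}{\sqrt{D}}\sum_{j=0}^{D-1}\ket{j}^R$.  If $d<D$, embed $S$ inside a $D$-dimensional register and extend $U$ to $U\oplus I_{D-d}$; the embedding is an incoherent isometry and the extended operator is still unitary, so no relevant structure is lost.  The heart of the argument is then to design a family of Kraus operators $\{K_m\}_{m=0}^{D-1}$ on $S\otimes R$ that are each generalized permutations of the joint incoherent basis (hence incoherent in the Baumgratz--Cramer--Plenio sense) and that collectively redirect the coherence stored in $\ket{\Phi_D}^R$ into the action of $U$ on $S$.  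My candidate is
\[
K_m \;=\; \sum_{i,j=0}^{D-1} U_{\,(i+j+m)\bmod D,\; i}\;\op{(i+j+m)\bmod D,\; m}{i,\; j},
\]
which is manifestly incoherent because $K_m\ket{i,j}^{SR}$ is a scalar multiple of the single joint basis vector $\ket{(i+j+m)\bmod D,\,m}^{SR}$.

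Next I would verify correctness and completeness.  Reindexing by $k=(i+j+m)\bmod D$ inside $\ket{\psi}^S\otimes\ket{\Phi_D}^R=\tfrac{1}{\sqrt{D}}\sum_{i,j}a_i\ket{i,j}$ gives $K_m(\ket{\psi}^S\otimes\ket{\Phi_D}^R)=\tfrac{1}{\sqrt{D}}\,(U\ket{\psi})^S\otimes\ket{m}^R$, so every branch produces $U\ket{\psi}$ on $S$ correlated with an incoherent label $m$ on $R$; tracing out $R$ (which is itself an incoherent operation) yields $U\rho U^\dagger$ deterministically.  Completeness $\sum_m K_m^\dagger K_m=I$ then reduces to $\sum_k\overline{U_{ki}}U_{ki'}=\delta_{i,i'}$, which is exactly the unitarity of $U$.

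The main thing to think about is how the severe restriction that joint Kraus operators be generalized permutations on the joint incoherent basis can still be compatible with implementing an arbitrary unitary $U$.  The index rule $(i+j+m)\bmod D$ is what makes this work: for each fixed $m$ it is a permutation of the pair $(i,j)$, so $K_m$ is manifestly incoherent; simultaneously, the fibers $\{(i,j)\colon (i+j+m)\bmod D=k\}$ sweep out an entire $i$-column of $U$, which is what reconstitutes $U\ket{\psi}$ in each branch.  A small bookkeeping point arises when $d<D$: one checks that each $K_m$ preserves the embedded $d$-dim subspace of $S$, which holds because the extension satisfies $U_{ki}=0$ for $i<d$ and $k\geq d$.
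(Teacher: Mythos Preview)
Your construction is correct: each $K_m$ is incoherent (one nonzero entry per column), the action on $\ket{\psi}\otimes\ket{\Phi_D}$ indeed gives $\tfrac{1}{\sqrt{D}}(U\ket{\psi})\otimes\ket{m}$ after the reindexing $k=(i+j+m)\bmod D$, and completeness reduces cleanly to $\sum_k\overline{U_{ki}}U_{ki'}=\delta_{ii'}$.  The padding argument for $d<D$ is also fine.  One terminological quibble: your $K_m$ are not ``generalized permutations'' in the usual sense (they are many-to-one on basis labels), but what matters for incoherence is exactly what you verify, namely that each $K_m\ket{i,j}$ is proportional to a single incoherent basis vector.

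The paper's proof takes a genuinely different route.  It implements $U$ by \emph{gate teleportation}: from the CoBits it first builds a $d$-dimensional maximally coherent state on an ancilla $S'$, then incoherently creates the maximally entangled state $\ket{\Phi^{(d)}_{S'S''}}$ with a further register $S''$, performs on $SS'$ the incoherent measurement $M_{jk}=\ket{jk}\bra{b_{jk}}(U\otimes\mathbb{I})$ in the generalized Bell basis, and finally applies the incoherent Weyl correction $W_{jk}^*$ on $S''$.  So the paper uses three registers, $d^2$ outcomes, and a post-measurement correction, while you use two registers, $D$ outcomes, and no correction.  Your argument is more self-contained and elementary; the paper's is more modular, makes the connection to teleportation explicit, and is thereby immediately seen to extend to controlled unitaries $\sum_r\op{r}{r}\otimes U_r$ at the same CoBit cost (their Remark after the lemma), a point you would need to argue separately.
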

\begin{lemma}
\label{lem:majorization}
Suppose $\ket{\psi}^{AB}$ and $\ket{\phi}^{AB}$ have reduced density matrices that are diagonal in the incoherent bases for both parties and both states.  Then $\ket{\psi}\to\ket{\phi}$ by LIOCC iff the squared Schmidt coefficients of $\ket{\phi}$ majorize those of $\ket{\psi}$.
\end{lemma}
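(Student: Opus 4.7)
The plan is to prove the two implications separately. The forward direction is immediate: since every LIOCC protocol is in particular an LOCC protocol, Nielsen's majorization theorem directly forces the squared Schmidt coefficients of $\ket{\phi}$ to majorize those of $\ket{\psi}$.

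For the converse, the strategy is to adapt Nielsen's LOCC transformation protocol so that every local step uses only incoherent Kraus operators. The first step is to reduce both states to a normal form. Because each eigenspace of $\rho_A^\psi$, $\rho_B^\psi$, $\rho_A^\phi$, and $\rho_B^\phi$ is spanned by incoherent basis vectors (by hypothesis), one can choose Schmidt decompositions in which the Schmidt vectors on each side are incoherent basis vectors; in the non-degenerate case this choice is essentially forced, while in the degenerate case one makes a coordinated choice within each degenerate eigenspace. After absorbing phases and matching labels via incoherent local unitaries (permutations composed with diagonal phase unitaries, which are free under LIOCC), we may assume the standard form
\[
\ket{\psi}=\sum_x \sqrt{p_x}\,\ket{x}^A\ket{x}^B, \qquad \ket{\phi}=\sum_x \sqrt{q_x}\,\ket{x}^A\ket{x}^B,
\]
with $q\succ p$ (after sorting).

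The second step is Nielsen's construction applied to this standard form. The Hardy--Littlewood--P\'olya theorem together with Birkhoff's theorem yields probabilities $\{r_\alpha\}$ and permutations $\{\sigma_\alpha\}$ satisfying $p_x=\sum_\alpha r_\alpha\,q_{\sigma_\alpha(x)}$, and I would have Alice perform the measurement with Kraus operators
\[
M_\alpha = \sqrt{r_\alpha}\sum_x \sqrt{q_{\sigma_\alpha(x)}/p_x}\,\op{x}{x}^A,
\]
which are diagonal in her incoherent basis, hence incoherent, and complete by direct verification. A short computation shows that the post-measurement state upon outcome $\alpha$ (occurring with probability $r_\alpha$) is $\sum_x \sqrt{q_{\sigma_\alpha(x)}}\,\ket{x}^A\ket{x}^B$. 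Alice then classically transmits $\alpha$ to Bob, and both parties apply the incoherent permutation unitary $\ket{x}\mapsto\ket{\sigma_\alpha(x)}$ on their own system, after which the joint state becomes $\ket{\phi}$ deterministically.

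The main obstacle is the reduction to the standard form, particularly when Schmidt coefficients are degenerate. Within a degenerate eigenspace, the Schmidt vectors on Alice's and Bob's sides are pinned to each other by a residual unitary rotation, and one must verify that this rotation can be exercised so as to simultaneously send both sides' Schmidt vectors to incoherent basis vectors. This is precisely the step where the hypothesis on $\rho_B^\psi$ and $\rho_B^\phi$ (and not just $\rho_A^\psi$ and $\rho_A^\phi$) being diagonal in the incoherent basis enters essentially. Once the standard form is secured, the rest of the argument is a direct incoherent translation of Nielsen's protocol.
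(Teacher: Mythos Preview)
Your overall strategy---reduce to maximally correlated form, apply Birkhoff/Hardy--Littlewood--P\'olya to the majorization relation, and implement Nielsen's measurement with incoherent Kraus operators---is exactly the paper's approach.  The only cosmetic difference is that the paper absorbs Alice's permutation into her Kraus operators, writing $M_\alpha=\sqrt{p_\alpha}\,\Pi_\alpha^\dagger\bullet S$ with $[[S]]_{ij}=\sqrt{\phi_i/\psi_j}$ (a scaled permutation matrix, hence incoherent), so that only Bob applies the correction $\Pi_\alpha$; your diagonal $M_\alpha$ followed by both parties permuting is equivalent.

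However, the reduction to standard form that you flag as the ``main obstacle'' is a genuine gap, and your suggested resolution does not go through.  The hypothesis that both reduced density matrices are diagonal in the incoherent bases does \emph{not} guarantee a Schmidt decomposition with incoherent Schmidt vectors on both sides when Schmidt coefficients are degenerate.  Take $\ket{\psi}=\tfrac{1}{\sqrt{2}}(\ket{0}\ket{+}+\ket{1}\ket{-})$: both marginals equal $\mbb{I}/2$, yet no local incoherent unitaries bring $\ket{\psi}$ to the form $\sum_x c_x\ket{x}\ket{\pi(x)}$ (any such image retains four equal-modulus incoherent-basis amplitudes).  In fact this $\ket{\psi}$ and $\ket{\phi}=\tfrac{1}{\sqrt{2}}(\ket{00}+\ket{11})$ share Schmidt spectrum $(1/2,1/2)$, yet $\ket{\psi}\not\to\ket{\phi}$ by LIOCC: since both states are maximally entangled, every entanglement-preserving incoherent Kraus operator on a qubit is proportional to an incoherent unitary, and local incoherent unitaries cannot collapse four equal amplitudes to two.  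So the main-text hypothesis, read literally, is too weak for the ``if'' direction.

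The paper sidesteps this in the Supplemental Material by restating the lemma with the stronger hypothesis that ``Alice and Bob's incoherent bases are Schmidt bases for both $\ket{\psi}$ and $\ket{\phi}$,'' which already puts both states in maximally correlated form up to local phases and permutations---and this is all that is needed where the lemma is invoked.  Under that reading your argument is correct and coincides with the paper's; the forward direction via Nielsen, which you include and the paper leaves implicit, is fine either way.
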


Next, we introduce a new LIOCC monotone and provide its operational interpretation.  To do so, we recall the recently studied task of \textit{assisted} coherence distillation, which involves one party helping another distill as much coherence as possible through general quantum operations performed on the helper side and incoherent operations performed on the distillation side \cite{Chitambar-2015a}.  For a given state $\rho^{AB}$, the optimal asymptotic rate of coherence distillation on Bob's side when Alice helps is denoted by $C_a^{A|B}(\rho^{AB})$.  When the roles are switched, the optimal asymptotic rate is denoted by $C_a^{B|A}(\rho^{AB})$.  It was shown in Ref. \cite{Chitambar-2015a} that $C_a^{A|B}(\rho^{AB})=S(Y)_{\Delta(\Psi)}$ and $C_a^{B|A}(\rho^{AB})=S(X)_{\Delta(\Psi)}$.  With these quantities in hand, we define for a bipartite pure state $\ket{\Psi}^{AB}$ the function
\begin{align}
C_{\mc{L}}(\Psi)&:=C^{A|B}_a(\Psi)+C^{B|A}_a(\Psi)-\mathrm{E}(\Psi)\notag\\
&=S(X)_{\Delta(\Psi)}+S(Y)_{\Delta(\Psi)}-\mathrm{E}(\Psi).
\end{align}
Its extension to mixed states can be defined by a convex roof optimization \cite{Vidal-2000a}: $C_{\mc{L}}(\rho^{AB})=\inf_{\{p_k,\ket{\varphi_k}^{AB}\}}\sum_k p_kC_{\mc{L}}(\varphi_k^{AB})$ for which $\rho^{AB}=\sum_kp_k\op{\varphi_k}{\varphi_k}$. 
\begin{theorem}
\label{thm4}
The function $C_{\mc{L}}$ is an LIOCC monotone.
\end{theorem}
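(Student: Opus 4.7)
The plan is to first prove strong monotonicity on pure states under one-shot local incoherent operations, and then extend to mixed states via the convex-roof construction. The algebraic observation that drives everything is that, since $\mathrm{E}(\Psi)=S(\Psi^A)=S(\Psi^B)$ for any pure $\ket{\Psi}^{AB}$, and since $S(X)_{\Delta(\Psi)}=S(\Delta^A(\Psi^A))$ and $S(Y)_{\Delta(\Psi)}=S(\Delta^B(\Psi^B))$, the definition can be rewritten in two equivalent forms:
\begin{align*}
C_{\mc{L}}(\Psi)&=\bigl[S(\Delta^A(\Psi^A))-S(\Psi^A)\bigr]+S(\Delta^B(\Psi^B))\\
&=S(\Delta^A(\Psi^A))+\bigl[S(\Delta^B(\Psi^B))-S(\Psi^B)\bigr].
\end{align*}
The bracketed quantity on each line is the single-party relative entropy of coherence of the corresponding marginal, which is a strong monotone under local incoherent operations by~\cite{Baumgratz-2014a}.

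Now consider one step of an LIOCC protocol in which Alice applies an incoherent Kraus decomposition $\{K_i\otimes\mbb{I}^B\}$, sending $\ket{\Psi}^{AB}$ to the pure-state ensemble $\{p_i,\ket{\psi_i}^{AB}\}$. Using the first rewriting, strong monotonicity of the relative entropy of coherence on Alice's side gives
\[\sum_i p_i\bigl[S(\Delta^A(\psi_i^A))-S(\psi_i^A)\bigr]\leq S(\Delta^A(\Psi^A))-S(\Psi^A).\]
Since Alice's operation is trace-preserving, Bob's marginal is preserved on average, $\sum_i p_i\psi_i^B=\Psi^B$, so by linearity $\sum_i p_i\Delta^B(\psi_i^B)=\Delta^B(\Psi^B)$, and concavity of the von Neumann entropy then yields $\sum_i p_i S(\Delta^B(\psi_i^B))\leq S(\Delta^B(\Psi^B))$. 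Adding these two inequalities produces $\sum_i p_i C_{\mc{L}}(\psi_i)\leq C_{\mc{L}}(\Psi)$. A Bob-side incoherent step is handled identically using the second rewriting, and classical communication only relabels the ensemble, so concatenation covers an arbitrary LIOCC protocol.

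For a mixed state $\rho^{AB}$, let $\{p_k,\ket{\varphi_k}^{AB}\}$ be an optimal pure-state decomposition attaining $C_{\mc{L}}(\rho^{AB})$, and let $\mc{L}$ be any LIOCC map, tracked branch-by-branch through its classical register. Each $\ket{\varphi_k}$ evolves into a pure-state ensemble $\{q_{j|k},\ket{\psi_{jk}}\}$, and $\mc{L}(\rho)=\sum_{jk}p_k q_{j|k}\op{\psi_{jk}}{\psi_{jk}}$ is then a valid ensemble decomposition of the output. The convex-roof definition combined with the pure-state monotonicity just established yields
\[C_{\mc{L}}(\mc{L}(\rho))\leq\sum_{jk}p_k q_{j|k}C_{\mc{L}}(\psi_{jk})\leq\sum_k p_k C_{\mc{L}}(\varphi_k)=C_{\mc{L}}(\rho).\]

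The main obstacle is conceptual rather than technical: $C_{\mc{L}}$ is a sum of two assisted-distillation rates minus an entanglement term, and each of these three quantities is individually LIOCC non-increasing, so monotonicity of $C_{\mc{L}}$ is not inherited term-by-term. The resolution is the identity $\mathrm{E}(\Psi)=S(\Psi^A)=S(\Psi^B)$, which absorbs the entanglement into exactly one of the two dephased entropies; whichever party acts then sees a relative entropy of coherence on its own side (a standard incoherent monotone) and a plain marginal entropy on the other side (controlled by concavity of $S$, since the acting party's local operation preserves the distant marginal on average).
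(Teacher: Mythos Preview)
Your proof is correct and follows essentially the same strategy as the paper's. Both arguments rewrite $C_{\mc{L}}(\Psi)$ on pure states as $C_r(\Psi^A)+S(Y)_{\Delta(\Psi)}=C_r(\Psi^B)+S(X)_{\Delta(\Psi)}$, use the first form when Alice acts and the second when Bob acts, then invoke the convex-roof extension for mixed states. The only substantive difference is how the non-bracketed term is controlled: the paper identifies $S(Y)_{\Delta(\Psi)}$ with the quantum-incoherent relative entropy $C_r^{A|B}(\Psi)$ and cites its LIOCC monotonicity from \cite{Chitambar-2015a}, whereas you prove the needed inequality directly from concavity of the von Neumann entropy together with the fact that Alice's local operation preserves Bob's marginal on average. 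Your route is slightly more self-contained; the paper's route highlights the connection to the established QI-relative-entropy monotone.
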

\noindent We note that this is the first monotone of its kind since it behaves monotonically under LIOCC, but not general LOCC or even under LQICC, the latter being an operational class in which only one of the parties is required to perform incoherent operations (as opposed to LIOCC where \textit{both} parties must perform incoherent operations) \cite{Chitambar-2015a}.  

Using the monotonicity of $C_{\mc{L}}$, we are able to derive tight upper bounds on coherence distillation rates.  
\begin{theorem}
\label{thm5}
For a pure state $\ket{\Psi}^{AB}$ the following triples are achievable coherence-entanglement distillation rates
\begin{align}
(R_A,R_B,E^{co})&=\left(S(X)_{\Delta(\Psi)}-\mathrm{E}(\Psi),\;S(Y)_{\Delta(\Psi)},\;0\right)\label{Eq:DistTriple1}\\
(R_A,R_B,E^{co})&=\left(0,\;S(Y|X)_{\Delta(\Psi)},\;I(X:Y)_{\Delta(\Psi)}\right),\label{Eq:DistTriple4}
\end{align}
as well as the points obtained by interchanging $A\leftrightarrow B$ in Eqn.~\eqref{Eq:DistTriple1} and \eqref{Eq:DistTriple4}.  Moreover, these points are optimal in the sense that any achievable rate triple must satisfy (i) $R_A+R_B\leq C_{\mc{L}}(\Psi)$ and (ii) $R_B+E^{co}\leq S(Y)_{\Delta(\Psi)}$.
\end{theorem}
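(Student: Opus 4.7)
The proof splits into achievability of the two rate triples and the two converse inequalities.

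\emph{Converse.} For (i), the plan is to leverage Theorem~\ref{thm4}: $C_{\mc{L}}$ is an LIOCC monotone, and since each of $S(X)_{\Delta}$, $S(Y)_{\Delta}$, and $\mathrm{E}$ is additive on pure tensor products, $C_{\mc{L}}(\Psi^{\otimes n})=nC_{\mc{L}}(\Psi)$. Direct calculation gives $C_{\mc{L}}(\Phi_A)=C_{\mc{L}}(\Phi_B)=C_{\mc{L}}(\Phi_{AB})=1$, so the target output $\Phi_A^{\otimes nR_A}\otimes\Phi_B^{\otimes nR_B}\otimes\Phi_{AB}^{\otimes nE^{co}}$ has $C_{\mc{L}}$-value $n(R_A+R_B+E^{co})$; monotonicity combined with an asymptotic-continuity estimate that absorbs the $\epsilon$-approximation yields $R_A+R_B+E^{co}\leq C_{\mc{L}}(\Psi)$, which subsumes (i). For (ii), the plan is to reduce to the LQICC assisted-distillation bound: given any LIOCC protocol producing $(R_A,R_B,E^{co})$, extend it by allowing Alice to perform Hadamard-basis measurements on her halves of the distilled eCoBits and classically transmit the outcomes so that Bob applies a Pauli-$Z$ correction. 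The extension is LQICC (Alice's Hadamard measurement is not incoherent) and converts each eCoBit into one CoBit on Bob's side, giving an LQICC protocol with Bob-side rate $R_B+E^{co}$. Bounding this by the optimal LQICC rate $C_a^{A|B}(\Psi)=S(Y)_{\Delta(\Psi)}$ from Ref.~\cite{Chitambar-2015a} produces (ii).

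\emph{Achievability of Eq.~\eqref{Eq:DistTriple4}.} Write $\ket{\Psi}=\sum_x\sqrt{p(x)}\ket{x}^A\ket{\Psi_x}^B$ in Alice's incoherent basis. The plan is to partition the $n$ copies into two asymptotically typical blocks. On the first block, I distill $nI(X:Y)$ eCoBits: the Schmidt spectrum of the target $\Phi_{AB}^{\otimes nI(X:Y)}$ is majorized by the Schmidt spectrum of $\Psi^{\otimes n}$ restricted to a jointly typical subspace of $\Delta(\Psi)^{\otimes n}$, so Lemma~\ref{lem:majorization} realizes the transformation after an initial sub-linear basis alignment supplied by Lemma~\ref{lem2}. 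On the second block, Alice measures in her incoherent basis (free), communicates the outcome $x$, and Bob applies Winter-Yang distillation to each conditional pure state $\ket{\Psi_x}^B$, harvesting the average $\sum_x p(x)\,S(\Delta(\Psi_x^B))=S(Y|X)_{\Delta(\Psi)}$ CoBits per copy.

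\emph{Achievability of Eq.~\eqref{Eq:DistTriple1} and the main obstacle.} The plan is to upgrade Bob's distillation rate from the LIOCC-trivial $S(Y|X)$ to the LQICC-optimal $S(Y)$ by importing the assisted protocol of Ref.~\cite{Chitambar-2015a} and simulating its non-incoherent unitary pieces within LIOCC via Lemma~\ref{lem2}, while in parallel Alice runs Winter-Yang locally to extract $nC_r(\Psi^A)=n(S(X)-\mathrm{E}(\Psi))$ CoBits. The main obstacle is the coherence-cost accounting: a naive invocation of Lemma~\ref{lem2} charges $\Theta(n)$ CoBits for a $2^{n\mathrm{E}(\Psi)}$-dimensional Alice-side rotation, which can exceed her $n(S(X)-\mathrm{E}(\Psi))$ extractable coherence. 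Circumventing this requires restricting the simulated unitary to the jointly typical subspace and exploiting its block structure so that only $o(n)$ CoBits are actually consumed by the simulation, leaving the net output $(n(S(X)-\mathrm{E}(\Psi)),\,nS(Y),\,0)$ as claimed; a secondary concern is ensuring the parallel runs of Alice's local Winter-Yang and Bob's assisted distillation do not interfere, which should follow by applying them on disjoint typical sub-blocks.
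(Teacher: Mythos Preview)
Your converse is essentially correct and matches the paper. One small correction: the eCoBit-to-CoBit conversion $\ket{\Phi_{AB}}\to\ket{\Phi_B}$ is already LIOCC---Alice can implement the $\{\ket{+},\ket{-}\}$ measurement incoherently via Kraus operators $\{\op{0}{+},\op{1}{-}\}$---so you need not pass to LQICC for bound (ii). With that fix, (ii) follows directly from monotonicity of $C_r^{A|B}$ (which equals $S(Y)_{\Delta(\Psi)}$ on pure states) under LIOCC, exactly as the paper argues.

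Your achievability arguments, however, have genuine gaps. For Eq.~\eqref{Eq:DistTriple4}, the two-block strategy cannot deliver the \emph{simultaneous} rates $(0,S(Y|X),I(X:Y))$: devoting a fraction $\alpha$ of the copies to eCoBit extraction and $1-\alpha$ to Bob's conditional Winter--Yang yields at best $\bigl(0,(1-\alpha)S(Y|X),\alpha R_{\text{eCoBit}}\bigr)$, which matches the target only if $\alpha=0$ and $\alpha=1$ at once. More fundamentally, your appeal to Lemma~\ref{lem:majorization} fails because that lemma requires the \emph{source} state to have the incoherent basis as its Schmidt basis, which is false for generic $\ket{\Psi}$; the proposed ``sub-linear basis alignment'' via Lemma~\ref{lem2} is not sub-linear, since diagonalizing the reduced state on $n$ copies is a $d^n$-dimensional unitary costing $\Theta(n)$ CoBits. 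For Eq.~\eqref{Eq:DistTriple1} you correctly identify the obstacle but do not overcome it: the assertion that the simulated unitary can be confined to a block structure of $o(n)$ CoBit cost is unsupported, and there is no reason the assisted-distillation unitaries of Ref.~\cite{Chitambar-2015a} admit such structure.

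The paper's achievability takes a completely different route and never consumes coherence via Lemma~\ref{lem2}. It builds explicit $n$-copy decompositions of $\ket{\Psi}^{\otimes n}$ by partitioning each typical type class $T^n_{p_t}$ into (a) $(n,\epsilon)$ channel codes for the classical channel $\mc{W}_{\text{CC}}:\op{x}{x}\mapsto\Delta(\psi_x)$ and (b) ``obfuscation'' sets from a covering lemma. For Eq.~\eqref{Eq:DistTriple4}, Alice incoherently measures her type $t$ and codebook index $l$; Bob then applies the coherent decoder $W_{tl}$, which is \emph{incoherent} because the optimal decoding POVM for $\mc{W}_{\text{CC}}$ can be taken diagonal, and a controlled permutation $\Pi_{tlc}^{-1}$, obtaining $\frac{1}{\sqrt{C_t}}\sum_c\ket{c}^{A}\ket{c}^{B_2}\ket{\psi_{tl_0c_0}}^B$ with $\log C_t\approx nI(X:Y)$ and $S(\Delta(\psi_{tl_0c_0}))\approx nS(Y|X)$. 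For Eq.~\eqref{Eq:DistTriple1}, a hybrid decomposition couples the obfuscation-set index $m$ (in Alice's register $A_2$, of size $M_t\approx 2^{n(S(X)-\mathrm{E}(\Psi))}$) to the channel codes through an Uhlmann unitary $V_{tm}$; a single incoherent measurement by Alice on $A_3A_4$ decouples $A_2$ from Bob, leaving Alice with $\log M_t$ CoBits while Bob recovers both $\ket{c}^{B_2}$ and $\ket{\psi_{tl_0c_0}}^B$, totalling $nI(X:Y)+nS(Y|X)=nS(Y)$ CoBits. Every step is incoherent by construction; no CoBits are spent.
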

\noindent This theorem endows $C_{\mc{L}}$ with the operational meaning of quantifying how much local coherence can be simultaneously distilled from a pure state.  For a state $\ket{\Psi}$ the maximum that Alice can help Bob distill coherence is $C_a^{A|B}$ while the maximum that Bob can help Alice is $C_a^{B|A}$.  Evidently, they cannot both simultaneously help each other at these optimal rates.  Instead, they are bounded away from simultaneous optimality at a rate equaling their shared entanglement.  

It is still unknown the precise range of achievable distillation triples $(R_A, R_B, E_{max}^{co})$, where $E_{max}^{co}$ is the maximum eCoBit distillation rate.  While we are able to prove that $E_{max}^{co}$ is the regularized version of $I(X:Y)_{\Delta(\Psi)}$ optimized over all LIOCC protocols, we have no single-letter expression for this rate nor do we know the achievable local coherence rates for optimal protocols.  

A natural question is whether $E_{max}^{co}(\Psi)=\mathrm{E}(\Psi)$.  While this question remains open, we can show that $\mathrm{E}(\Psi)$ is achievable if the Schmidt basis of the final state need not be incoherent.  More precisely, we say a number $R$ is an achievable LIOCC entanglement distillation rate if for every $\epsilon>0$, there exists an LIOCC protocol $\mc{L}$ acting on $n$ copies of $\Psi$ such that $\mc{L}(\Psi^{\otimes n})\overset{\epsilon}{\approx} \Lambda_d$,
where $\Lambda_d$ is a $d\otimes d$ maximally entangled pure state (i.e. $\Lambda^A=\Lambda^B=\mbb{I}/d$) with $\frac{1}{n}\log d>R-\epsilon$.  The largest achievable distillation rate will be denoted by $E^{LIOCC}_D(\Psi)$.
\begin{theorem}
\label{thm6} 
$E^{LIOCC}_D(\Psi)=\mathrm{E}(\Psi)$.
\end{theorem}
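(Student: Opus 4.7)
The upper bound $E^{\text{LIOCC}}_D(\Psi)\le \mathrm{E}(\Psi)$ is immediate, since LIOCC is a subclass of LOCC and the LOCC distillable entanglement of a pure bipartite state is exactly the entropy of entanglement, as established by Bennett--Bernstein--Popescu--Schumacher (BBPS). The substance of the theorem is the achievability direction, and my plan is to adapt the BBPS concentration protocol to the LIOCC setting.

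Writing the Schmidt decomposition $\ket{\Psi}=\sum_i\sqrt{\lambda_i}\ket{\alpha_i}^A\ket{\beta_i}^B$, the tensor power expands as $\ket{\Psi}^{\otimes n}=\sum_{\vec i}\sqrt{\lambda_{\vec i}}\ket{\vec\alpha_{\vec i}}^A\ket{\vec\beta_{\vec i}}^B$ with $\lambda_{\vec i}=\prod_k\lambda_{i_k}$. Grouping by Schmidt type, the state is an orthogonal sum over type classes $T_\tau$, each contributing a maximally entangled sub-state of dimension $|T_\tau|$; for typical $\tau$ one has $|T_\tau|\approx 2^{n\mathrm{E}(\Psi)}$. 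Thus the task reduces to post-selecting onto a single typical $T_\tau$ using only LIOCC.

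The main obstacle is that the natural BBPS projector $P_\tau^A=\sum_{\vec i\in T_\tau}\ket{\vec\alpha_{\vec i}}\bra{\vec\alpha_{\vec i}}$ is written in the Schmidt basis $\{\ket{\vec\alpha_{\vec i}}\}$, which in general is not the incoherent basis on Alice's side and therefore is not an incoherent Kraus operator. The key observation that lets me circumvent this is that $\ket{\Lambda_d}$ is allowed to be any maximally entangled state: I may choose $\ket{\Lambda_d}=\Phi_\tau$ itself, so the output naturally sits inside the Schmidt typical subspace and no post-rotation to the incoherent basis is required.

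To implement the type-selection incoherently, I would build a multi-round LIOCC protocol from (a) local incoherent filters of the form $K=\sum_{\vec x}\mu_{\vec x}\ket{\pi(\vec x)}\bra{\vec x}$, (b) classical back-and-forth coordinating Alice's and Bob's filter supports using the Schmidt correlation of $\ket{\Psi}^{\otimes n}$, and (c) repeated invocation of Lemma~\ref{lem:majorization} on successful branches, since after each filtering step the intermediate pure state has reduced density matrices aligned with the running incoherent bases, which is precisely the hypothesis of that lemma. The hardest step I anticipate is the end-to-end error analysis: showing that this composite LIOCC procedure concentrates the post-selected Schmidt coefficients to within $\epsilon$ of uniform on a subspace of dimension $\approx 2^{n(\mathrm{E}(\Psi)-\delta)}$ with success probability approaching one. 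I expect this to follow from standard typicality estimates for the Schmidt spectrum of $\ket{\Psi}^{\otimes n}$ together with the flexibility afforded by needing only the existence of \emph{some} maximally entangled output of the claimed size, so that any fidelity loss in converting between bases is absorbed into an arbitrarily small $\epsilon$. Matching the lower and upper bounds then yields $E^{\text{LIOCC}}_D(\Psi)=\mathrm{E}(\Psi)$.
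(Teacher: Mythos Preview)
Your upper bound is fine; the issue is entirely with your achievability sketch. The paper's protocol does \emph{not} attempt to implement BBPS Schmidt-type projections incoherently. Instead it writes $\ket{\Psi}=\sum_x\sqrt{p(x)}\ket{x}^A\ket{\psi_x}^B$ in Alice's \emph{incoherent} basis (the $\ket{\psi_x}$ being non-orthogonal), builds HSW channel codes for the CQ channel $\op{x}{x}\mapsto\psi_x$, and observes that the Holevo quantity $I(X{:}B)$ of this channel equals $\mathrm{E}(\Psi)$. Alice then simply measures the incoherent type $t$ and codebook index $\overline{l}$ in the incoherent basis---manifestly an incoherent operation---and the residual state is maximally entangled of rank $\overline{C_t}\approx 2^{n\mathrm{E}(\Psi)}$ by coherent decodability of the channel code. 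No Schmidt-basis manipulations are needed at all.

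By contrast, your steps (a)--(c) do not actually produce the claimed effect. An incoherent filter $K=\sum_{\vec x}\mu_{\vec x}\op{\pi(\vec x)}{\vec x}$ acts diagonally in the incoherent basis and cannot in general distinguish Schmidt types, since the Schmidt vectors $\ket{\vec\alpha_{\vec i}}$ are superpositions in that basis. More importantly, your invocation of Lemma~\ref{lem:majorization} is unjustified: that lemma requires the reduced density matrices of \emph{both} input and target to be diagonal in the incoherent bases, and a generic $\ket{\Psi}$ (or any incoherent filtering of it) does not satisfy this---indeed the whole point of the theorem is that the Schmidt basis need not be incoherent. The assertion that ``after each filtering step the intermediate pure state has reduced density matrices aligned with the running incoherent bases'' is simply false for generic $\ket{\Psi}$. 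What you call the ``hardest step'' (error analysis) is not the obstacle; the protocol itself is not specified in a way that accomplishes Schmidt-type selection, and there is no evident repair within the BBPS template. The missing idea is to abandon the Schmidt decomposition and instead extract $\mathrm{E}(\Psi)$ as the Holevo information of the incoherent-basis CQ ensemble.
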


It is interesting to compare the coherence distillation rates using incoherent operations under different types of locality constraints.  In Refs.~\cite{Oppenheim-2002a, Horodecki-2003b, Oppenheim-2003a, Horodecki-2005a}, similar comparisons were made in terms of purity (or work-information) extraction.  Let $C_D^{Global}$, $C_D^{LIOCC}$, and $C_D^{LIO}$ denote the optimal rate sum $R_A+R_B$ of local coherence distillation using global incoherent operations, LIOCC, and local incoherent operations (with no classical communication), respectively.  In complete analogy to \cite{Oppenheim-2002a, Horodecki-2003b, Oppenheim-2003a, Horodecki-2005a}, we define the \textit{nonlocal coherence deficit} of a bipartite state $\rho^{AB}$ as $\delta(\rho^{AB})=C_D^{Global}(\rho^{AB})-C_D^{LIOCC}(\rho^{AB})$ and the \textit{LIOCC coherence deficit} as $\delta_c(\rho^{AB})=C_D^{LIOCC}(\rho^{AB})-C_D^{LIO}(\rho^{AB})$.  Intuitively, the quantity $\delta(\rho^{AB})$ quantifies the coherence in a state that can only be accessed using nonlocal incoherent operations.  Likewise, $\delta_c(\rho^{AB})$ gives the coherence in $\rho^{AB}$ that requires classical communication to be obtained.  The results of Winter and Yang imply that $C_D^{Global}(\Psi)=S(XY)_{\Delta(\Psi)}$ and $C_D^{LIO}(\Psi)=S(X)_{\Delta(\Psi)}+S(Y)_{\Delta(\Psi)}-2\mathrm{E}(\Psi)$ for a bipartite pure state $\ket{\Psi}^{AB}$ \cite{Winter-2005a}.  Combined with Theorem \ref{thm5}, we can compute the two coherence deficits for pure states:
\begin{align}
\delta(\Psi)&=\mathrm{E}(\Psi)-I(X:Y)_{\Delta(\Psi)}\label{Eq:nonlocal-deficit}\\
\delta_c(\Psi)&=\mathrm{E}(\Psi).\label{Eq:LIOCC-deficit}
\end{align}
It is curious that the entanglement $\mathrm{E}(\Psi)$ quantifies the coherence gain unlocked by \textit{classical} communication. But note that a similar phenomenon exists in the resource theory of purity.  Namely, the quantum deficit $\overline{\delta}(\Psi)$ and classical deficit $\overline{\delta}_c(\Psi)$ measure the analogous differences in local purity distillation by so-called ``closed operations'' (CO), and they are given by $\overline{\delta}(\Psi)=\overline{\delta}_c(\Psi)=\mathrm{E}(\Psi)$ \cite{Oppenheim-2002a, Horodecki-2003b}.  For the task of distilling CoBits, every protocol using incoherent operations can be seen as one using closed operations by accounting for all ancilla systems at the start of protocol \footnote{With sufficient ancilla systems, every incoherent operation can be modeled using a projective measurement on a larger system and classical processing \cite{Chitambar-2016a}.  However, since the target states are pure, one does not need to classically process measurement outcomes or discard any subsystems in a successful distillation protocol, and thus the whole procedure can be done on a ``closed'' system \cite{Horodecki-2005a}}.  However, closed operations allow for arbitrary unitary rotations, which are forbidden in coherence theory.  The term $I(X:Y)_{\Delta(\Psi)}$ in $\delta(\Psi)$ identifies precisely the basis dependence in coherence theory and shows how this decreases the nonlocal coherence deficit $\delta(\Psi)$ relative to $\overline{\delta}(\Psi)$.  On the other hand, there is evidently no basis dependency in the LIOCC classical deficit $\delta_c(\Psi)$ and it is equivalent to $\overline{\delta}_c(\Psi)$.

Although our distillation results so far have only applied to pure states, we can deduce a very general result concerning the distillability of mixed states.
\begin{theorem}
\label{thm7}
A mixed state $\rho^{AB}$ has (LOCC) distillable entanglement if and only if entanglement can be distilled using LIOCC.
\end{theorem}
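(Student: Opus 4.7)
The ``if'' direction is immediate, since LIOCC is a subclass of LOCC. I focus on the nontrivial ``only if'' direction: LOCC-distillability of $\rho^{AB}$ implies LIOCC-distillability.

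The plan is to strengthen the Horodecki distillability criterion and then exhibit an explicit LIOCC protocol built from the strengthening. The standard criterion says that $\rho$ is distillable iff there exist $n$ and rank-$2$ local projectors $P_A, P_B$ such that $\sigma := (P_A\otimes P_B)\rho^{\otimes n}(P_A\otimes P_B)$ is a $2\times 2$ NPT (hence entangled) state. I aim to prove the stronger statement that these projectors can always be taken to be incoherent-basis projectors $P_A = \ket{\vec{x}_1}\bra{\vec{x}_1} + \ket{\vec{x}_2}\bra{\vec{x}_2}$ and $P_B = \ket{\vec{y}_1}\bra{\vec{y}_1} + \ket{\vec{y}_2}\bra{\vec{y}_2}$ for some incoherent basis strings, possibly at the cost of enlarging $n$. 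To establish this, I would expand a general witness $\ket{\psi} = \alpha\ket{e_1 f_1} + \beta\ket{e_2 f_2}$ in the incoherent basis so that $\bra{\psi}(\rho^{\otimes n})^{T_B}\ket{\psi}$ becomes a linear combination of matrix elements of $(\rho^{\otimes n})^{T_B}$ indexed by incoherent string quadruples, and use a typical-subspace/probabilistic argument on a sufficiently large tensor power of $\rho$ to isolate a single quadruple $(\vec{x}_1, \vec{x}_2, \vec{y}_1, \vec{y}_2)$ for which the restricted $4$-dimensional state $(P_A\otimes P_B)\rho^{\otimes n}(P_A\otimes P_B)$ remains NPT.

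Given the strengthening, the LIOCC distillation is now explicit. Alice and Bob perform the local incoherent measurements $\{P_A, I - P_A\}$ and $\{P_B, I - P_B\}$ respectively; every Kraus operator maps incoherent states to incoherent states, so these are valid local incoherent operations. Conditioning on the joint successful outcome (which has positive probability) yields a two-qubit NPT state $\sigma$. Repeating this on many disjoint blocks of $n$ copies of $\rho$ produces many i.i.d. copies of $\sigma$, from which Bell states can be distilled by a standard two-qubit recurrence/hashing protocol supplemented by local incoherent diagonal filters and Pauli rotations to raise the singlet fraction above $1/2$ where needed. All primitives involved---bilateral CNOTs (permutations of the incoherent basis), Pauli corrections, computational basis measurements, classical communication, and diagonal filters---lie within LIOCC, so the entire distillation protocol is LIOCC.

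The hard part will be the strengthening step. The form $\bra{\psi}(\rho^{\otimes n})^{T_B}\ket{\psi}$ is a sum of complex-valued matrix elements in which positive and negative real parts can cancel, so naive pigeonhole on individual magnitudes fails. I expect the argument to proceed by restricting $\ket{\psi}$ and the candidate strings $\vec{x}_i,\vec{y}_j$ to the joint typical set of $\Delta(\rho^{\otimes n})$, then using an averaging/probabilistic step to show that the expected restricted partial transpose inherits the negative eigenvalue of the original witness. Controlling the phase of $\rho^{\otimes n}_{(\vec{x}_1\vec{y}_2),(\vec{x}_2\vec{y}_1)}$ so that its real contribution reinforces rather than cancels the desired negativity is the subtlest technical point; an initial local incoherent phase adjustment or a structured typical subset may be needed. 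A secondary issue is ensuring that the two-qubit state extracted by incoherent projection itself admits LIOCC distillation, which may require a refined version of the strengthening that additionally controls the singlet-fraction geometry of the projected state.
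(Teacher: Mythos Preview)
Your approach differs genuinely from the paper's and carries two real gaps. First, the strengthening you need---that the Horodecki rank-two projectors can be taken as projections onto incoherent-basis strings---is only sketched, and you yourself flag the unresolved phase-cancellation obstruction; the paper establishes only a weaker statement (its Corollary~8, with projectors $P_A=\op{\alpha_0}{\alpha_0}+\op{\alpha_1}{\alpha_1}$ satisfying $\tr[\Delta(\alpha_0)\Delta(\alpha_1)]=0$ rather than computational-basis projectors), and that as a \emph{consequence} of Theorem~7, not as an input to it. So your logical order is inverted relative to the paper, and the version you require is strictly stronger than what is actually proved there. Second, even granting your strengthening, you have not shown that a generic two-qubit NPT state is LIOCC-distillable: BBPSSW recurrence needs a bilateral $SU(2)$ twirl to reach Werner form, and the DEJMPS variant still needs Hadamard-type local rotations to Bell-diagonalize---neither is incoherent. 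Your list of primitives (CNOT, Paulis, $Z$-basis measurements, diagonal filters) is accurate but omits precisely these coherence-generating steps, and the assertion that diagonal filters plus Paulis suffice to push the singlet fraction above $1/2$ is unsubstantiated, since the optimal local filter for a generic entangled two-qubit state is not diagonal in the incoherent basis.

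The paper's proof is much shorter and avoids both problems. Lemma~2 shows that an arbitrary local unitary can be simulated by incoherent operations at the price of consuming local CoBits, so any LOCC protocol whatsoever can be run as an LIOCC protocol once enough local coherence is available. One then invokes the result of \cite{Chitambar-2015a} that local coherence can always be LIOCC-distilled from \emph{any} entangled state: sacrifice $n_A+n_B$ copies of $\rho$ to manufacture the required CoBits for Alice and Bob, then simulate the given LOCC distillation on the remaining $n$ copies. No direct strengthening of the Horodecki criterion is needed; Corollary~8 is deduced afterward from Theorem~7 rather than used to prove it.
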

The proof of this theorem is actually quite simple and uses the fact that an arbitrary quantum operation can be simulated using incoherent operations and CoBits (Lemma \ref{lem2}).  In Ref.~\cite{Chitambar-2015a} it was shown how local coherence can always be distilled for both Alice and Bob from multiple copies of every entangled states using LIOCC.  Hence for a sufficiently large number of any distillable entangled state $\rho^{AB}$, Alice and Bob first distill sufficient local coherence using LIOCC, and then they simulate the LOCC protocol which distills entanglement.

As shown in Ref.~\cite{Horodecki-1998a}, a state $\rho$ has distillable entanglement iff for some $k$ there exists rank two operators $A$ and $B$ such that the (unnormalized) state $A\otimes B\rho^{\otimes k} A\otimes B$ is entangled.  By Theorem \ref{thm5} and following the same argumentation of Ref.~\cite{Horodecki-1998a}, we can further require that the $A$ and $B$ are incoherent operators; that is, they have the form $A=\op{0}{\alpha_0}+\op{1}{\alpha_1}$ and $B=\op{0}{\beta_0}+\op{1}{\beta_1}$ where $\Delta(\alpha_0):=\Delta(\op{\alpha_0}{\alpha_0})$ is orthogonal to $\Delta(\alpha_1):=\Delta(\op{\alpha_1}{\alpha_1})$, and likewise for $\Delta(\beta_0):=\Delta(\op{\beta_0}{\beta_0})$ for $\Delta(\beta_1):=\Delta(\op{\beta_1}{\beta_1})$.  We are thus able to add an additional condition to the distinguishability criterion of Ref.~\cite{Horodecki-1998a}.
\begin{corollary}
\label{cor8}
A bipartite state $\rho$ has distillable entanglement iff for any pair of orthonormal local bases $\mc{B}_A=\{\ket{x}^A\}$ and $\mc{B}_B=\{\ket{y}^B\}$ there exists some $k$ and projectors $P_A=\op{\alpha_0}{\alpha_0}+\op{\alpha_1}{\alpha_1}$ and $P_B=\op{\beta_0}{\beta_0}+\op{\beta_1}{\beta_1}$ such that \begin{enumerate}
\item $(P_A\otimes P_B)\rho^{\otimes k} (P_A\otimes P_B)$ is entangled,
\item $\tr[\Delta_A(\alpha_0)\Delta_A(\alpha_1)]=\tr[\Delta_B(\beta_0)\Delta_B(\beta_1)]=0$,
\end{enumerate} where $\Delta_Z$ is the completely dephasing map in the basis $\mc{B}^{\otimes k}_Z$.
\end{corollary}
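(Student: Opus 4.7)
The plan is to handle the two directions separately. The reverse direction is immediate from Ref.~\cite{Horodecki-1998a}: the projectors $P_A, P_B$ serve as rank-two local filtering operators, so condition (1) directly invokes the Horodecki distillability criterion to conclude that $\rho$ is distillable.

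For the forward direction, I would start by invoking Theorem~\ref{thm7}: since $\rho$ is LOCC distillable, for any choice of local bases $\mc{B}_A, \mc{B}_B$ there exists an LIOCC protocol on $\rho^{\otimes n}$ (for some sufficiently large $n$) whose averaged output on a local qubit register is $\epsilon$-close to a maximally entangled state, and is therefore entangled for $\epsilon$ small enough. Since convex combinations of separable states remain separable, at least one classical measurement branch of the protocol must produce a conditional unnormalized output $(K_A \otimes K_B)\rho^{\otimes n}(K_A \otimes K_B)^\dagger$ that is entangled, where $K_A, K_B$ are the composite local Kraus operators accumulated along that branch. Each is a composition of incoherent Kraus operators and hence itself incoherent, and each has rank at most two because the branch terminates on a qubit.

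I would then convert the pair $(K_A, K_B)$ into a pair of rank-two incoherent projectors satisfying (2), and reduce the entanglement statement to (1). Incoherence of $K_A$ with a qubit output forces the form $K_A = \op{0}{\alpha_0} + \op{1}{\alpha_1}$ with $\alpha_0, \alpha_1$ having disjoint support in $\mc{B}_A^{\otimes n}$; otherwise $K_A\op{x}{x}K_A^\dagger$ would carry off-diagonal entries for some incoherent $\ket{x}$. This disjointness immediately yields $\alpha_0 \perp \alpha_1$ together with $\tr[\Delta_A(\alpha_0)\Delta_A(\alpha_1)] = 0$, so after normalization $\hat\alpha_i = \alpha_i/\|\alpha_i\|$ the operator $P_A = \op{\hat\alpha_0}{\hat\alpha_0} + \op{\hat\alpha_1}{\hat\alpha_1}$ (and analogously $P_B$) is an incoherent rank-two projector. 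For (1), I would factor $K_A = D_A V_A$ with local isometry $V_A = \op{0}{\hat\alpha_0} + \op{1}{\hat\alpha_1}$ satisfying $V_A^\dagger V_A = P_A$ and invertible diagonal $D_A = \mathrm{diag}(\|\alpha_0\|, \|\alpha_1\|)$ on the output qubit, and analogously $K_B = D_B V_B$. Then $(P_A \otimes P_B)\rho^{\otimes n}(P_A \otimes P_B)$ is obtained from the entangled state $(K_A \otimes K_B)\rho^{\otimes n}(K_A \otimes K_B)^\dagger$ by applying the local invertible filter $D_A^{-1} \otimes D_B^{-1}$ followed by the local isometric embedding $V_A^\dagger \otimes V_B^\dagger$, both of which preserve entanglement; hence (1) holds with $k = n$.

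The main obstacle is the qualitative jump from an approximate distillation guarantee to a concrete entangled branch. I would handle this by taking $\epsilon$ strictly smaller than the trace distance from the target maximally entangled state to the separable set, so that the averaged output is provably entangled; convexity of the separable set then forces at least one $p_k$-weighted conditional state to be entangled. A secondary point is ensuring that the extracted operators are genuinely rank-two: if $K_A$ had rank at most one on a chosen branch, the branch output would take the product form $\op{\varphi}{\varphi} \otimes (\cdot)$ and be separable, contradicting the choice of branch.
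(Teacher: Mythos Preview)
Your proposal is correct and follows essentially the same route as the paper's argument: invoke Theorem~\ref{thm7} to obtain an LIOCC distillation protocol, then mimic the Horodecki reduction from Ref.~\cite{Horodecki-1998a} to extract rank-two local filters from a single entangled branch, using incoherence of the composite Kraus operators to force the disjoint-support condition (2). The paper's proof is only a one-sentence sketch in the main text (``following the same argumentation of Ref.~\cite{Horodecki-1998a}''), and what you have written is precisely the natural expansion of that sketch, including the conversion from incoherent Kraus operators $K_A=\op{0}{\alpha_0}+\op{1}{\alpha_1}$ to the projectors $P_A$ via the factorization $K_A=D_AV_A$.
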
 



\textit{Conclusion:}  In this letter, we have investigated the relationship between entanglement and coherence in the framework of local incoherent operations and classical communication.   The findings of this study suggest that indeed entanglement and coherence are closely linked resources.  For instance, Theorem \ref{thm5} shows that the entanglement of a state plays a crucial role in limiting the amount of coherence that can be distilled from a state, a result highly reminiscent of the complementarity between local and nonlocal information studied in Ref.~\cite{Oppenheim-2003a}.  In a similar spirit, Theorem \ref{thm7} shows that entanglement distillability can be studied through the lens of coherence theory.  This latter result seems somewhat remarkable since despite coherence being a basis-dependent resource, its resource-theoretic analysis can be used to draw conclusions about entanglement, a basis-independent resource.  Future work will be conducted to see whether the strengthened distillability criterion of Corollary \ref{cor8} can be useful in the long-standing search for NPT bound entanglement.

Finally, we would like to comment on the particular type of incoherent operations studied in this letter.  As noted in the introduction, there have been various proposals for the ``free'' class of operations in a resource theory of coherence.  This letter has adopted the incoherent operations (IO) of Baumgratz \textit{et al.}~\cite{Baumgratz-2014a}, where each Kraus operator in a measurement just needs to be incoherence-preserving.  While the class IO has drawbacks in terms of formulating a full physically consistent resource theory of coherence \cite{Yadin-2015a, Chitambar-2016a}, it nevertheless seems unlikely that the results of this letter would remain true if other operational classes were considered.  For example, the strictly incoherent operations (SIO) proposed by Yadin \textit{et al.}~are unable to convert one eCoBit into a CoBit \cite{Yadin-2015a}.  Thus, we believe that the interesting connections between IO coherence theory and entanglement demonstrated in this letter make a positive case for why IO is important in quantum information theory, independent of any other motivation.  In fact, one could even put coherence aside and view LIOCC as just being a simplified subset of LOCC.  As we have shown here, nontrivial conclusions about entanglement can indeed be drawn by studying LOCC from ``the inside.''  This approach is somewhat dual to the standard practice of studying LOCC using more general separable operations (SEP), the chain of inclusions being LIOCC $\subset$ LOCC $\subset$ SEP.  Interesting future work would be to consider more general connections between coherence non-generating and entanglement non-generating operations.

During preparation of this manuscript, we learned of work by Streltsov and co-authors who have also initiated a study into local incoherent operations and classical communication \cite{Streltsov-2015b}.  

\medskip

\noindent\textit{\textbf{Acknowledgments}}

We thank Alex Streltsov for fruitful exchanges on the topic of coherence distillation.  EC is supported by the National Science Foundation (NSF) Early CAREER Award No. 1352326. MH is supported by an ARC Future Fellowship under Grant FT140100574.

\bibliography{CoherenceBib}

\onecolumngrid

\bigskip

\bigskip

\begin{center}
{\huge Supplemental Material}
\end{center}

\section{Preliminaries}

\subsection{Distance Measures}

The distance measure used in this paper is based on the trace norm, which for an operator $A$ is $\norm{A}:=\tr|A|=\tr\sqrt{A^\dagger A}$.  The trace distance for two states $\rho$ and $\sigma$ is $D_{tr}(\rho,\sigma)=\tfrac{1}{2}\norm{\rho-\sigma}$, and we will write $\rho\overset{\epsilon}{\approx}\sigma$ to indicate $D_{tr}(\rho,\sigma)\leq\epsilon$.  The fidelity of two states - given by $F(\rho,\sigma)=\tr\sqrt{\sqrt{\rho}\sigma\sqrt{\rho}}$ - can be related to the trace distance by \cite{Fuchs-1999a}
\[1-F(\rho,\sigma)\leq D_{tr}(\rho,\sigma)\leq\sqrt{1-F(\rho,\sigma)^2}.\]
When $\sigma$ is pure, the lower bound can be improved:
\begin{equation}
1-F(\rho,\op{\varphi}{\varphi})^2\leq D_{tr}(\rho,\op{\varphi}{\varphi}).
\end{equation}
We reference Fannes' inequality, which provides a bound on entropy difference in terms of the trace distance.
\begin{lem}[Fannes-Audenaert Inequality \cite{Fannes-1973a, Audenaert-2007a}]
\label{lem:Fannes}
For density matrices $\rho$ and $\sigma$ acting on a $d$-dimensional space, 
\begin{equation}
|S(\rho)-S(\sigma)|\leq D_{tr}(\rho,\sigma)\log(d-1)+h\left(D_{tr}(\rho,\sigma)\right),
\end{equation}
where $h(x)=-x\log x-(1-x)\log(1-x)$.
\end{lem}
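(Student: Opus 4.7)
The plan is to reduce the matrix inequality to an entropy inequality for classical probability distributions and then prove that classical version by a maximum-coupling argument.

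First I would pass to eigenvalues. Because the von Neumann entropy depends only on the spectrum, let $(p_i)_{i=1}^d$ and $(q_i)_{i=1}^d$ be the eigenvalue sequences of $\rho$ and $\sigma$ listed in non-increasing order, so that $S(\rho) = H(p)$ and $S(\sigma) = H(q)$ with $H$ the Shannon entropy. A Lidskii--Mirsky majorisation bound gives $T' := \tfrac{1}{2}\sum_i |p_i - q_i| \leq \tfrac{1}{2}\norm{\rho - \sigma} = T$. The function $f(x) = x\log(d-1) + h(x)$ is non-decreasing on $[0, 1-1/d]$, and on $(1-1/d, 1]$ one already has $f(x) \geq \log d \geq |S(\rho) - S(\sigma)|$, so it suffices to establish the classical Audenaert inequality $|H(p) - H(q)| \leq T'\log(d-1) + h(T')$.

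For the classical bound I would invoke the maximum coupling of $p$ and $q$: a joint distribution $\pi$ on $\{1,\dots,d\}^2$ with marginals $p,q$ and $\sum_i \pi(i,i) = 1 - T'$, whose existence is a standard fact about total variation distance. Let $(X,Y) \sim \pi$ and $E = \mathbf{1}_{X \neq Y}$, so that $\Pr[E=1] = T'$. The chain rule then gives
\begin{equation}
H(Y \mid X) = H(E \mid X) + H(Y \mid X, E) \leq h(T') + T'\log(d-1),
\end{equation}
since $H(E \mid X) \leq H(E) = h(T')$, the event $E=0$ forces $Y = X$ (contributing zero conditional entropy), and on $E=1$ the variable $Y$ is confined to at most $d-1$ values. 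The same estimate holds for $H(X \mid Y)$ by symmetry, and $H(Y) - H(X) = H(Y \mid X) - H(X \mid Y)$ then yields the desired two-sided bound on $|H(p) - H(q)|$.

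The main obstacle is the first reduction step: one must justify the Lidskii--Mirsky inequality comparing the $\ell_1$ distance between sorted spectra to the trace distance of the operators, and handle the boundary regime $T > 1 - 1/d$ where $f$ is no longer monotone (in that regime the bound is slack and can be verified directly from $|S(\rho)-S(\sigma)| \leq \log d$). Once the reduction to the classical problem is in place, the coupling argument is essentially Fano's inequality repackaged, and the remainder of the proof is a few lines of entropy bookkeeping.
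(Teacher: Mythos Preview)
The paper does not give its own proof of this lemma: it is stated in the Preliminaries section of the Supplemental Material with citations to Fannes and Audenaert, and then used as a black box. There is therefore no in-paper argument to compare your proposal against.

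Your outline is essentially Audenaert's approach and the main steps are sound: the spectral reduction via a Lidskii--Mirsky bound, followed by a maximum-coupling/Fano argument for the classical inequality, is the standard route. One genuine slip, however: your claim that $f(x)=x\log(d-1)+h(x)\geq\log d$ on $(1-1/d,1]$ is false. In fact $f$ attains its maximum $\log d$ precisely at $x=1-1/d$ and then decreases to $f(1)=\log(d-1)$. Hence your monotonicity step from $f(T')$ to $f(T)$ breaks down whenever $T>1-1/d$, and the ``direct verification'' you sketch for that regime does not work as stated. This boundary case needs a separate argument --- for instance, one can show that the $\ell_1$-distance between two \emph{sorted} probability vectors on $d$ points never exceeds $2(1-1/d)$, so that $T'\leq 1-1/d$ always and $f(T')\leq\log d$; the remaining case $T>1-1/d$ then still requires care because $f(T)<\log d$ there. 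The fix is not difficult but is not the one-liner you suggest.
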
 
\noindent Finally, we will need Winter's gentle measurement lemma.
\begin{lem}[Gentle Measurements \cite{Winter-1999a}] 
\label{Lem:gentle}
For $\rho\geq 0$ and $\tr\rho\leq 1$, suppose $0\leq X\leq\mbb{I}$ and $\tr\rho X\geq 1-\epsilon$.  Then $||\rho-\sqrt{X}\rho\sqrt{X}||\leq\sqrt{8\epsilon}$.
\end{lem}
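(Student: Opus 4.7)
The plan is to reduce the trace-norm bound to an estimate on the scalar quantity $\tr[\rho(\mbb{I}-\sqrt{X})^2]$, which morally measures how much of $\rho$ is supported on the ``unlikely'' part of the measurement. The starting point is the algebraic decomposition
\begin{equation}
\rho - \sqrt{X}\rho\sqrt{X} = (\mbb{I}-\sqrt{X})\rho + \sqrt{X}\rho(\mbb{I}-\sqrt{X}),
\end{equation}
obtained by adding and subtracting $\sqrt{X}\rho$. This isolates a single factor of $\mbb{I}-\sqrt{X}$ on one side of each summand, which is exactly the shape one wants for a non-commutative Cauchy--Schwarz estimate. I would then bound each term individually and combine with the triangle inequality.

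The central step is a single operator inequality. Since $0\leq X\leq\mbb{I}$, functional calculus applied to the scalar inequality $t^{2}\leq t$ on $[0,1]$ gives $X\leq\sqrt{X}$, so $(\mbb{I}-\sqrt{X})^{2}=\mbb{I}-2\sqrt{X}+X\leq\mbb{I}-X$. Tracing against $\rho$ then converts the hypothesis $\tr[\rho X]\geq1-\epsilon$ directly into the clean bound
\begin{equation}
\tr\!\bigl[\rho(\mbb{I}-\sqrt{X})^{2}\bigr]\leq\tr[\rho(\mbb{I}-X)]=\tr\rho-\tr[\rho X]\leq\epsilon.
\end{equation}
This is where every hypothesis gets used; the remainder of the proof is purely analytic.

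The final step is the non-commutative Cauchy--Schwarz inequality $\norm{AB}\leq\sqrt{\tr(A^{\dagger}A)}\sqrt{\tr(B^{\dagger}B)}$ for the trace norm. Factoring $(\mbb{I}-\sqrt{X})\rho=\bigl[(\mbb{I}-\sqrt{X})\sqrt{\rho}\bigr]\sqrt{\rho}$, the Hilbert--Schmidt norm of the first factor is $\sqrt{\tr[\rho(\mbb{I}-\sqrt{X})^{2}]}\leq\sqrt{\epsilon}$ and that of the second is $\sqrt{\tr\rho}\leq1$, yielding $\norm{(\mbb{I}-\sqrt{X})\rho}\leq\sqrt{\epsilon}$. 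For the second summand $\sqrt{X}\rho(\mbb{I}-\sqrt{X})$, the bound $\norm{\sqrt{X}}_{\infty}\leq1$ combined with the same Cauchy--Schwarz estimate (applied to the adjoint) gives the same $\sqrt{\epsilon}$. Summing via the triangle inequality produces the estimate $2\sqrt{\epsilon}$, which is actually stronger than the stated $\sqrt{8\epsilon}$; the looser constant in the paper is presumably an artifact of a more general variant that handles non-positive or sub-normalized operators via polarization.

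I do not foresee a real obstacle here: the only subtle point is recognizing that one must pass through $\sqrt{X}$ rather than $X$ itself in the operator inequality (the naive attempt with $X$ only gives $\tr[\rho(\mbb{I}-X)^{2}]\leq\epsilon$ under an extra assumption like $X\leq\mbb{I}$ plus idempotence). Once this observation is in place, the decomposition, the functional-calculus step, and the trace-norm Cauchy--Schwarz inequality fit together in a line or two.
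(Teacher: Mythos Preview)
Your argument is correct and in fact yields the sharper constant $2\sqrt{\epsilon}$; each step (the algebraic splitting, the functional-calculus bound $(\mbb{I}-\sqrt{X})^{2}\leq\mbb{I}-X$, and the H\"older/Cauchy--Schwarz estimate $\norm{AB}_{1}\leq\norm{A}_{2}\norm{B}_{2}$) is sound as written.

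Regarding comparison with the paper: there is nothing to compare against, because the paper does not prove this lemma at all. It is quoted in the preliminaries as a known tool with a citation to Winter (1999), and is then invoked only once, in the construction of Decomposition~2, to conclude that $(\mbb{I}^{A_3}\otimes\sqrt{X_{tl}})\ket{\widehat{\widehat{\chi}}_{tl}}\overset{O(\epsilon)}{\approx}\ket{\widehat{\widehat{\chi}}_{tl}}$. So your write-up is not an alternative to the authors' proof but rather a self-contained substitute for the external reference. The looser $\sqrt{8\epsilon}$ in the stated version is simply the constant appearing in the cited source (and in standard textbook treatments that go through a slightly less efficient polarization); it is not something the present paper derives or needs, since only the qualitative $O(\epsilon)$ behavior is used downstream.

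One small quibble with your closing commentary: the remark that the ``naive attempt with $X$'' would require idempotence is a bit off---one always has $(\mbb{I}-X)^{2}\leq\mbb{I}-X$ for $0\leq X\leq\mbb{I}$ by the same functional calculus, so $\tr[\rho(\mbb{I}-X)^{2}]\leq\epsilon$ holds without any extra hypothesis. The genuine reason to pass through $\sqrt{X}$ is not that inequality but the target expression $\sqrt{X}\rho\sqrt{X}$ itself, which forces factors of $\mbb{I}-\sqrt{X}$ (rather than $\mbb{I}-X$) to appear in the decomposition. This does not affect the validity of your proof, only the motivating narrative.
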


\subsection{Types, Typical Sequences, Channel Coding}

In what follows, we assume that random variables $X,Y,\cdots$ take on values $x,y,\cdots$ from sets $\mc{X},\mc{Y},\cdots$. Probability distributions will be denoted by $p$ or $q$.  For $n$ identical and independently distributed (i.i.d.) events each with outcome distribution $p$, the distribution over the sequence of events is denoted by $p^n$.  See Refs. \cite[Chapter 2]{Csiszar-2011a} and \cite{Devetak-2004b} for a comprehensive presentation of the following concepts.

The \textbf{type} of a sequence $x^n\in\mc{X}^n$ is the distribution $p_{x^n}$ over $\mc{X}$ defined by
\[p_{x^n}(a):=\frac{1}{n}N(a|x^n)\qquad \forall a\in\mc{X},\]
where $N(a|x^n)$ is the number of occurrences of the symbol $a\in\mc{X}$ in the sequence $x^n$.
For a given distribution $p$, the collection of all sequences having type $p$ is called the \textbf{type class} of $p$ and is denoted by $T_p^n$.  A distribution $p$ is said to be an \textbf{empirical type} (for some $n\in\mbb{N}$) if $T_p^n$ is nonempty.

A sequence $x^n$ is said to be \textbf{$\delta$-typical} (or just typical) w.r.t. distribution $p$ if 
\[\left|\frac{1}{n}N(a|x^n)-p(a)\right|<\delta\qquad\forall a\in\mc{X}.\]
The set of all $\delta$-typical sequences will be denoted by $T^n_{[p]_\delta}$.  
Note that the set $T^n_{[p]_\delta}$ is the union of empirical type classes, and hence we will say that a distribution $q$ is typical w.r.t. $p$ if it is an empirical type with $T^n_q\subset T^n_{[p]_\delta}$.  

Three standard properties of typicality are the following.  First, for $n$ i.i.d. samples of $\mc{X}$ according to distribution $p$, 
\begin{equation}
\label{Eq:typicalProb}
p^n\left(T^n_{[p]_\delta}\right):=\pr[x^n\in T^n_{[p]_\delta}]\geq 1-\epsilon
\end{equation}
for any $\epsilon,\delta>0$ and $n$ sufficiently large.  Second, let $X$ be a random variable having distribution $p$ and entropy $H(X):=-\sum_{a\in\mc{X}}p(a)\log p(a)$.  Then the size of $T^n_{[p]_\delta}$ can be related to $H(X)$ as   
\begin{equation}
\bigg|\frac{1}{n}\log|T^n_{[p]_{\delta}}|-H(X)\bigg| \leq \epsilon
\end{equation}
 for any $\epsilon,\delta>0$ and $n$ sufficiently large.  Third, if $q\in T^n_{[p]_\delta}$ then (for $\delta< (2|\mc{X}|)^{-1}$)
\begin{equation}
\label{Eq:TypicalTypeNumber}
(n+1)^{-|\mc{X}|}2^{n(H(X)-\tau(\delta))}\leq |T^n_{q}|\leq 2^{n(H(X)+\tau(\delta))},
\end{equation}
where $\tau(\delta)=-\delta|\mc{X}|\log\delta$.  Note that $\tau(\delta)$ is a function increasing in $\delta$ for the range $0<\delta< (2|\mc{X}|)^{-1}$.  Eq. \eqref{Eq:TypicalTypeNumber} follows from an application of Lemma \ref{lem:Fannes} to the inequality
\begin{equation}
(n+1)^{-|\mc{X}|}2^{nH(X)}\leq |T^n_{p}|\leq 2^{nH(X)}
\end{equation}

Moving to the quantum setting, for a quantum system $\mc{H}$, we will assume throughout that the computational basis $\{\ket{x}\}$ is the incoherent basis.  For an empirical type $p$, the corresponding \textbf{type projector} acting on $\mc{H}^{\otimes n}$ is given by
\[\Pi_p=\sum_{x^n\in T^n_p}\op{x^n}{x^n}.\]
We restrict attention to CQ-channels $\mc{W}_{\rm CQ}:\op{x}{x}^X\to\rho_{x}^B$, which map each element $\ket{x}$ to a density matrix $\rho_x^B$ acting on $\mc{H}$.  Note that CQ channels generalize classical channels.  If we are given a set of transition probability $p(y|x)$ characterizing a classical channel, then the corresponding CQ channel is $\op{x}{x}^X\mapsto\rho_x^Y$, where $\rho_x^Y=\sum_{y\in\mc{Y}}p(y|x)\op{y}{y}$.    We will refer to channels of this form as CC channels $\mc{W}_{\rm CC}$.

When a distribution $p$ is given over $\mc{X}$, we associate a quantum ensemble $\{p(x),\rho_x^B\}_{x\in\mc{X}}$ with the channel $\mc{W}_{\rm CQ}$, as well as a classical-quantum state $\rho^{XB}$:
\[
\rho^{XB} = \sum_{x\in\mc{X}} p(x) \proj{x}^X \otimes \rho_x^B.
\]
The mutual information of the classical-quantum state $\rho^{XB}$ is the so-called Holevo quantity and denoted by $I(X:B)=S(\sum_x p(x)\rho_x^B)-\sum_x p(x)S(\rho_x^B)$.  For CC channels, the associated joint state is fully incoherent:
\[\rho^{XY}=\sum_{x\in\mc{X},y\in\mc{Y}}p(x,y)\op{x}{x}^X\otimes\op{y}{y}^Y,\]
with mutual information $I(X:Y)=H(X)+H(Y)-H(XY)$.

Recall that an $(n,\epsilon)$ code of size $C$ for a CQ channel $\mc{W}_{\rm CQ}:\op{x}{x}^X\to\rho_{x}^B$ is a sequence of codewords $(U^{(c)})_{c=1}^C$ with $U^{(c)}\in \mc{X}^n$ and a POVM $\{D_c\}_{c=1}^C$ acting on $\mc{H}^{\otimes n}$ such that
\begin{equation}
\label{Eq:epsilon-good}
\frac{1}{C}\sum_{c=1}^C \tr[\rho^n_{U^{(c)}}D_{c}]>1-\epsilon,
\end{equation}
where if $U^{(c)}=x^n$, then $\rho^n_{U^{(c)}}:=\rho_{x_1}^B\otimes\cdots\otimes\rho_{x_n}^B \in \mc{H}^{\otimes n}$.

\subsection{Coding Theorems}

We now introduce the information-theoretic machinery that provides the foundation for our coding schemes.  The following is adopted from the work of Devetak and Winter in Ref.~\cite{Devetak-2005a}.  For an empirical type $p$ over $\mc{X}$, let $(U^{(lc)})$ be an i.i.d.~sequence of random variables obtained by sampling from $T^n_p$ uniformly, with $l=1,\cdots, L$ and $c=1,\cdots,C$.  We consider the following events:
\begin{itemize}
\item \textit{$\epsilon$-evenness}:  For all $x^n\in T^n_p$,
\begin{equation}
(1-\epsilon)\frac{LC}{|T^n_p|}\leq \sum_{lc}\mathbbm{1}_{U^{(lc)}}(x^n)\leq (1+\epsilon)\frac{LC}{|T^n_p|},
\end{equation}
where $\mathbbm{1}_{U^{(lc)}}$ is the indicator function for whether $U^{(lc)}=x^n$.
\item \textit{$\{C_l\}_{l=1}^L$ are $(n,\epsilon)$ codes}: For every $l=1,\cdots, L$ the codebook $C_l:=(U^{(lc)})_{c=1}^C$ forms an $(n,\epsilon)$ code for the channel $\mc{W}_{\rm CQ}:\op{x}{x}^X\to\rho_{x}^B$.
\end{itemize}
\begin{lem}[\cite{Holevo-1998a, Schumacher-1997a, Devetak-2005a}]
\label{lem:HSWcover}
Consider a CQ channel $\mc{W}_{\rm CQ}:\op{x}{x}^X\to\rho_{x}^B$ and a random variable $X$ with a distribution given by some empirical type $p$.  Let $(U^{(lc)})$ be an i.i.d.~sampling from $T^n_p$ with $l=1,\cdots, L$, $c=1,\cdots, C$, and $C_l=(U^{(lc)})_{c=1}^C$.  For every $\delta,\epsilon>0$ and $n$ sufficiently large, 
\begin{align}
&\pr\{\text{$\epsilon$-evenness}\}\geq 1-|\mc{X}|^n2^{-LC\tfrac{\epsilon^2}{2\ln 2|T^n_p|}},\notag\\
C\leq 2^{n(I(X:B)-\delta)}\quad\Rightarrow\quad&\pr\{\text{A fraction $1-2\epsilon$ of the $\{C_l\}_{l=1}^L$ are $(n,\epsilon)$ channel codes}\}\geq 1-2^{-L\tfrac{\epsilon^2}{4\ln 2}}.
\end{align}
\end{lem}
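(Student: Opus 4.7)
The plan is to handle the two claims separately, the first by a multiplicative Chernoff estimate and a union bound, the second by combining the standard HSW random coding theorem with a Hoeffding-type concentration on the quality of the $L$ random codebooks.

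For the evenness claim, fix $x^n \in T^n_p$ and let $N(x^n) = \sum_{l,c}\mathbbm{1}_{U^{(lc)}}(x^n)$. Since the $U^{(lc)}$ are i.i.d.~uniform on $T^n_p$, the summands are $LC$ i.i.d.~Bernoulli variables with parameter $1/|T^n_p|$, so $\mathbb{E}[N(x^n)] = LC/|T^n_p|$. The multiplicative Chernoff bound then yields
\begin{equation}
\Pr\!\left[\,|N(x^n) - \mathbb{E} N(x^n)| > \epsilon\, \mathbb{E} N(x^n)\right] \leq 2\exp\!\left(-\frac{\epsilon^2 LC}{2|T^n_p|}\right) = 2 \cdot 2^{-\frac{\epsilon^2 LC}{2\ln 2 \cdot |T^n_p|}}.
\end{equation}
Taking a union bound over the $|T^n_p| \leq |\mc{X}|^n$ sequences in $T^n_p$ produces the claimed bound on the probability of failing $\epsilon$-evenness.

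For the coding claim, the plan is in two stages. First, I invoke the standard HSW random coding argument in the form due to Devetak--Winter: if one draws a single codebook $C_l = (U^{(lc)})_{c=1}^C$ uniformly from $T^n_p$ with $C \leq 2^{n(I(X:B)-\delta)}$, then with an appropriate decoding POVM built from type projectors $\Pi_p$ on the output side together with the conditionally typical projectors (and using the gentle measurement Lemma \ref{Lem:gentle} to control the disturbance from the projection onto the typical subspace), the expected average error probability over the random codebook is bounded by some $\epsilon' = \epsilon'(n,\delta) \to 0$. By choosing $n$ large enough we can enforce $\epsilon' \leq \epsilon^2$, whence Markov's inequality gives that each individual $C_l$ fails to be an $(n,\epsilon)$ code (in the sense of \eqref{Eq:epsilon-good}) with probability at most $\epsilon$.

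Second, let $Z_l \in \{0,1\}$ indicate that $C_l$ is a good $(n,\epsilon)$ code. The $Z_l$ are i.i.d.~Bernoulli random variables with $\mathbb{E} Z_l \geq 1-\epsilon$, so Hoeffding's inequality gives
\begin{equation}
\Pr\!\left[\sum_{l=1}^L Z_l < L(1-2\epsilon)\right] \leq \exp\!\left(-\tfrac{L\epsilon^2}{2}\right) \leq 2^{-\frac{L\epsilon^2}{4\ln 2}},
\end{equation}
which is precisely the stated bound. The main technical obstacle is the first stage: one must produce the decoding POVM explicitly and verify that its average error bound holds uniformly enough in the empirical type $p$ to absorb into $\epsilon^2$ once $n$ is large, which is where the interplay between the typical-projector trick, the gentle measurement lemma, and the type-size estimate \eqref{Eq:TypicalTypeNumber} comes in. Once that expectation bound is in hand, the remaining combinatorics are routine Markov/Hoeffding manipulations.
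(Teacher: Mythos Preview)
Your approach is correct and matches the standard argument; note, however, that the paper does not actually prove this lemma but cites it from the HSW and Devetak--Winter literature, so there is no ``paper's own proof'' to compare against beyond those references. Your two-part strategy---Chernoff plus union bound for $\epsilon$-evenness, and HSW random coding followed by Markov and Hoeffding for the codebook fraction---is exactly the route taken in \cite{Devetak-2005a}, and the constants you obtain are at least as strong as those stated (your Hoeffding bound $\exp(-L\epsilon^2/2)$ is tighter than the stated $2^{-L\epsilon^2/(4\ln 2)}=\exp(-L\epsilon^2/4)$, and the factor of $2$ from the two-sided Chernoff tail is harmless since the lemma is only used qualitatively in the subsequent corollary).
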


\begin{cor}
\label{cor:HSWcover}
Consider a CQ channel $\mc{W}_{\rm CQ}:\op{x}{x}^X\to\rho_x^B$ and let $X$ be a distribution given by some empirical type $p$.  For sufficiently large $n$ and $\delta<I(X:B)$, there exists a partition of $T^n_p$ such that a fraction $1-3\epsilon$ of the sequences in $T^n_p$ belong to an $(n,\epsilon)$ channel code $C_1,\cdots, C_L$, where $L=\lceil 2^{n(H(X)- I(X:B)+2\delta)}\rceil$ and each $C_l$ consists of $C=\lfloor 2^{n(I(X:B)-\delta)}\rfloor$ codewords.
\end{cor}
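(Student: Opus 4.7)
The plan is to apply Lemma \ref{lem:HSWcover} by the probabilistic method and then bookkeep which sequences lie in a good sub-codebook. First, I would sample codewords $(U^{(lc)})$, $l=1,\dots,L$, $c=1,\dots,C$, i.i.d.~uniformly from $T^n_p$ with the prescribed $L$ and $C$, so that $LC \approx 2^{n(H(X)+\delta)}$, which by Eq.~\eqref{Eq:TypicalTypeNumber} grows exponentially faster than $|T^n_p|$. Plugging into Lemma \ref{lem:HSWcover}, both failure probabilities (for $\epsilon$-evenness and for having at least a $(1-2\epsilon)$-fraction of the sub-codebooks be $(n,\epsilon)$ channel codes) decay doubly exponentially in $n$, so a union bound yields a realization that simultaneously satisfies both events; fix one such realization.

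Next, I would discard the at most $2\epsilon L$ bad codebooks and bound the set $S \subseteq T^n_p$ of sequences missed by the surviving good codebooks. $\epsilon$-evenness guarantees that every $x^n \in S$ occupies at least $(1-\epsilon)LC/|T^n_p|$ positions, all of which must lie inside bad codebooks; since bad codebooks contain at most $2\epsilon LC$ positions in total,
\[
|S|\cdot(1-\epsilon)\frac{LC}{|T^n_p|} \;\leq\; 2\epsilon LC,
\]
which gives $|S|\leq \tfrac{2\epsilon}{1-\epsilon}|T^n_p| \leq 3\epsilon |T^n_p|$ for $\epsilon$ small enough. Thus at least $(1-3\epsilon)|T^n_p|$ sequences from $T^n_p$ appear inside some good codebook.

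To convert this covering into a genuine partition, I would assign every $x^n \notin S$ to one of the good codebooks that contains it (say, the smallest-indexed one) and place the residual set $S$ into a separate leftover cell of the partition. The inherited decoding POVM still functions on any sub-codebook, and since an $(n,\epsilon)$ code remains an $(n,\epsilon')$ code after pruning at most an $\epsilon$-fraction of its codewords (with $\epsilon' = \epsilon/(1-\epsilon)$), the pruning loss can be absorbed into the $\delta$-slack by taking $n$ slightly larger and $\epsilon$ slightly smaller than nominally specified.

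The main obstacle I anticipate lies precisely in the pruning step: because $LC \gg |T^n_p|$, each $x^n$ is covered on the order of $2^{n\delta}$ times in the raw random draw, and forcing a disjoint assignment can drastically shrink individual codebooks. Controlling this shrinkage requires more than the evenness bound --- one must additionally verify that the assignment does not concentrate deletions on particular codebooks. A randomized assignment of each $x^n$ uniformly among the good codebooks that contain it, followed by a second application of concentration and the union bound, should show that almost every surviving $C_l$ retains a $(1-O(\epsilon))$ fraction of its codewords, at which point the average-error condition in Eq.~\eqref{Eq:epsilon-good} still holds (possibly with a slightly re-labelled $\epsilon$) and the partition claim follows.
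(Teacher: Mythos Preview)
Your core argument is correct and mirrors the paper's proof almost exactly: sample $(U^{(lc)})$ i.i.d.\ from $T^n_p$, verify that $LC/|T^n_p|\to\infty$ so that both events in Lemma~\ref{lem:HSWcover} hold with probability tending to $1$, fix a realization satisfying both, and then count. The only cosmetic difference is which half of the $\epsilon$-evenness inequality you use. The paper invokes the \emph{upper} bound: the $(1-2\epsilon)LC$ codeword slots in good codebooks can represent at most $(1+\epsilon)LC/|T^n_p|$ copies of any single $x^n$, so they cover at least $\tfrac{1-2\epsilon}{1+\epsilon}|T^n_p|>(1-3\epsilon)|T^n_p|$ distinct sequences. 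You instead invoke the \emph{lower} bound on the complement $S$, which is equally valid and yields the same $(1-3\epsilon)$ conclusion.

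Where you diverge is in the final paragraph, and you are in fact holding yourself to a stricter standard than the paper does. The paper's proof stops precisely at the coverage statement ``at least a fraction $(1-3\epsilon)$ of the elements of $T^n_p$ are codewords for an $(n,\epsilon)$ code'' and does not prune the overlapping random codebooks into disjoint blocks of exact size $C$. In the subsequent code constructions (Decompositions~2--4) the label $(t,l,c)$ is simply any good-codebook position containing $x^n$, with the uncovered $3\epsilon$-fraction discarded; the codebooks are used only through the decodability of individual codewords, so overlaps are harmless. Your proposed second randomization to control per-codebook shrinkage is therefore not needed to match the paper---though your instinct that the word ``partition'' in the statement is doing more work than the proof actually delivers is accurate.
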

\begin{proof}
Since by the choices of $L$ and $C$,
\[
\frac{LC}{|T^n_p|}\geq  2^{n\delta}+2^{-n(H(X|B)+\delta)}\to\infty,
\]
then by Lemma \ref{lem:HSWcover} an i.i.d. sequence $(U^{(lc)})$ will satisfy 
\begin{align}
&\pr\{\text{$\epsilon$-evenness}\}\to 1\notag\\
&\pr\{\text{A fraction $1-2\epsilon$ of the $C_l$ are $(n,\epsilon)$ channel codes}\}\to 1\notag
\end{align}
 as $n\to\infty$.  Thus for sufficiently large $n$, there must exist families of $(n,\epsilon)$ codes $(U^{(lc)})\subset T^n_{p}$ for $\mc{W}_{\textrm{CQ}}$ that cover $T^n_{p}$ with a fraction $1-2\epsilon$ of these codes being $(n,\epsilon)$ channel codes.  The union of these codes will consist of $(1-2\epsilon)LC$ codewords, including multiplicities.  But by $\epsilon$-evenness, the number of distinct codewords in this union will be at least $\frac{(1-2\epsilon)LC}{(1+\epsilon)LC/|T^n_p|}>(1-3\epsilon)|T^n_p|$.  Indeed, $\epsilon$-evenness guarantees that each individual $x^n$ has multiplicity no more than $(1+\epsilon)\frac{LC}{|T^n_p|}$ among all the codebooks.  Hence, the number of distinct codewords is at least $\frac{(1-2\epsilon)LC}{(1+\epsilon)LC/|T^n_p|}$, and so at least a fraction $(1-3\epsilon)$ of the elements of $T^n_p$ are codewords for an $(n,\epsilon)$ code.
\end{proof}

With Corollary \ref{cor:HSWcover}, we are able to almost entirely cover each type class $T^n_{p}$ by $(n,\epsilon)$ channel codes having a constant rate $C$.  We will also be interested in decomposing $T^n_{p}$ into ``obfuscation'' sets.  The following covering lemma is presented in \cite{Winter-2015a}.
\begin{lem}
\label{lem:Cover}
Consider a CQ channel $\mc{W}_{\rm CQ}:\op{x}{x}^X\to\rho_{x}^B$ on a $d$-dimensional Hilbert space and a random variable $X$ with a distribution given by some empirical type $p$.  Let $\{\Omega_s\}_{s=1}^S$ be obtained by a uniform sampling without replacement from the set $\{\rho^n_{x^n}:x^n\in T^n_p\}$.  Define the average state
\[\sigma(p):=\frac{1}{|T_p^n|}\sum_{x^n\in T^n_p}\rho_{x^n}.\]
Then for every $\epsilon,\delta\in(0,1)$ and $n$ sufficiently large,
\begin{equation}
\pr\left\{\|\frac{1}{S}\sum_{s=1}^S \Omega_s-\sigma(p)\|\geq\epsilon\right\}\leq 2 d^n \exp\left(-S t^n\frac{\epsilon^2}{288\ln 2}\right),
\end{equation}
where $t=2^{-(I(X:B)+\delta)}$.
\end{lem}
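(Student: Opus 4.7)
The plan is to follow the standard matrix Chernoff (Ahlswede--Winter) covering strategy combined with typical-subspace smoothing. Set $\sigma:=\sum_x p(x)\rho_x$ and $\bar S:=\sum_x p(x)S(\rho_x)$, so that $I(X\!:\!B)=S(\sigma)-\bar S$. First I would introduce two kinds of typical projectors on $\mc{H}^{\otimes n}$: the average typical projector $\Pi_\sigma$ for $\sigma^{\otimes n}$ and, for each $x^n\in T^n_p$, the conditionally typical projector $\Pi_{x^n}$ adapted to the spectral decomposition of $\rho_{x^n}$. Standard type/spectral estimates then provide, for $n$ sufficiently large: $\tr(\rho_{x^n}\Pi_{x^n})\geq 1-\epsilon$, $\tr(\sigma^{\otimes n}\Pi_\sigma)\geq 1-\epsilon$, the rank bound $\mathrm{rank}(\Pi_\sigma)\leq 2^{n(S(\sigma)+\delta')}\leq d^n$, and the spectral bounds $\Pi_{x^n}\rho_{x^n}\Pi_{x^n}\leq 2^{-n(\bar S-\delta')}\Pi_{x^n}$ and $\Pi_\sigma\sigma^{\otimes n}\Pi_\sigma\leq 2^{-n(S(\sigma)-\delta')}\Pi_\sigma$, with $\delta'$ as small as desired relative to the $\delta$ appearing in the lemma.

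Next I would smooth each $\rho_{x^n}$ to $\tilde\rho_{x^n}:=\Pi_\sigma\Pi_{x^n}\rho_{x^n}\Pi_{x^n}\Pi_\sigma$. Two applications of the gentle measurement lemma (Lemma~\ref{Lem:gentle}) give $\|\rho_{x^n}-\tilde\rho_{x^n}\|\leq 4\sqrt{2\epsilon}$, and sandwiching the conditionally typical spectral bound between the outer $\Pi_\sigma$'s yields the crucial operator inequality $0\leq\tilde\rho_{x^n}\leq 2^{-n(\bar S-\delta')}\Pi_\sigma$. Averaging over $x^n\in T^n_p$ produces a smoothed mean $\tilde\sigma(p):=\mbb{E}[\tilde\Omega_s]$ with $\|\tilde\sigma(p)-\sigma(p)\|\leq 4\sqrt{2\epsilon}$ and, after a second use of the typicality estimate on $\sigma^{\otimes n}$, largest eigenvalue satisfying $\lambda_{\max}(\tilde\sigma(p))\cdot 2^{n(\bar S-\delta')}\leq t^n$ once the slack parameters are tuned.

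With these in hand I would rescale $Y_s:=2^{n(\bar S-\delta')}\tilde\Omega_s$, so that $0\leq Y_s\leq\Pi_\sigma\leq\mbb{I}$ and $\lambda_{\max}(\mbb{E}[Y_s])\leq t^n$. The Ahlswede--Winter matrix Chernoff bound, applied on the $\mathrm{rank}(\Pi_\sigma)\leq d^n$ subspace, then yields an inequality of the form $\pr[\|\tfrac{1}{S}\sum_s Y_s-\mbb{E}[Y_s]\|\geq\eta\,t^n]\leq 2d^n\exp(-St^n\eta^2/(C\ln 2))$ for a universal constant $C$. Undoing the rescaling, folding the smoothing error $4\sqrt{2\epsilon}$ into $\sigma(p)$ through the triangle inequality, and tuning $\eta$ and $\epsilon,\delta'$ so that all losses are absorbed into the target $\epsilon$ delivers the claimed bound; the constant $288$ accounts for the accumulated losses from the two gentle measurements, the two triangle inequalities, and the conversion between operator norm and the norm appearing in the statement on the rank-$d^n$ support.

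The principal obstacle is that the samples $\Omega_s$ are drawn uniformly \emph{without} replacement from $T^n_p$, whereas Ahlswede--Winter is formulated for i.i.d. samples. I would resolve this via a Serfling-type reduction for matrix-valued samples: couple the without-replacement sample to an i.i.d. one and extend the scalar fact that sampling without replacement is stochastically more concentrated than sampling with replacement to the operator setting using Lieb's theorem for the matrix exponential moment. The resulting error is absorbed into $\epsilon$ and leaves the $S$-dependence of the exponent untouched. A secondary bookkeeping hurdle is keeping the small parameters $\epsilon,\delta,\delta'$ mutually consistent so that the final exponent comes out exactly $St^n\epsilon^2/(288\ln 2)$ with $t=2^{-(I(X\!:\!B)+\delta)}$.
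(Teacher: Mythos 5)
The paper itself does not prove this lemma; it is quoted from Winter and Yang \cite{Winter-2015a}, so there is no in-paper argument to compare against. Your architecture --- smoothing with the average and conditionally typical projectors, the operator inequality $0\le\tilde\rho_{x^n}\le 2^{-n(\bar S-\delta')}\Pi_\sigma$, a rescaled operator Chernoff bound on the rank-$\le d^n$ support, and the Gross--Nesme/Serfling reduction through the matrix exponential moment to handle sampling without replacement --- is the standard route to this statement, and the last point is indeed the correct resolution of the only nonstandard feature of the lemma.

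There is, however, a concrete error at the central step. The Ahlswede--Winter bound in the form needed here is multiplicative: for $0\le Y_s\le\mbb{I}$ with $M=\mbb{E}[Y_s]$ satisfying $M\ge a\,\Pi$ on its support, one has $\pr\{\tfrac1S\sum_sY_s\notin[(1-\eta)M,(1+\eta)M]\}\le 2\,\mathrm{rank}(\Pi)\exp(-Sa\eta^2/(c\ln 2))$; the exponent is governed by a \emph{lower} bound $a$ on the smallest eigenvalue of $M$ restricted to its support. That is exactly where $t^n=2^{-n(I(X:B)+\delta)}$ comes from: $\mbb{E}[Y_s]=2^{n(\bar S-\delta')}\tilde\sigma(p)\ge 2^{n(\bar S-\delta')}\,2^{-n(S(\sigma)+\delta')}\Pi_\sigma=2^{-n(I(X:B)+2\delta')}\Pi_\sigma$, up to the smoothing error. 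You instead assert $\lambda_{\max}(\mbb{E}[Y_s])\le t^n$, which is false --- the largest eigenvalue is of order $2^{n(\bar S-\delta')}2^{-n(S(\sigma)-\delta')}=2^{-nI(X:B)}=t^n 2^{n\delta}\gg t^n$ --- and, more importantly, an upper bound on $\lambda_{\max}$ is not the hypothesis of the Chernoff bound you then invoke, so your displayed probability inequality does not follow from what you have established. The multiplicative form also handles your norm conversion more cleanly, since membership in $[(1-\eta)M,(1+\eta)M]$ gives $\norm{\tfrac1S\sum_s\tilde\Omega_s-\tilde\sigma(p)}\le\eta\,\tr[\tilde\sigma(p)]\le\eta$ directly after unrescaling, with no rank factor entering the deviation. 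The repair is a sign flip plus bookkeeping, but as written the step that produces the exponent $St^n\epsilon^2$ is wrong. A secondary point to address: the restricted lower bound requires $\tilde\sigma(p)$ to have full rank on the range of $\Pi_\sigma$, so you must either work on the support of $\tilde\sigma(p)$ or argue the rank deficiency away.
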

\noindent We will say that a collection of states $\{\Omega_s\}_{s=1}^S$ corresponding to $S$ distinct sequences from $T^n_p$ is ``good'' if $\norm{\frac{1}{S}\sum_{s=1}^S \Omega_s-\sigma(p)}<\epsilon$.
\begin{cor}[\cite{Winter-2015a}]
\label{cor:Cover}
Let $X$ be a random variable with distribution given by some empirical type $p$.  For $n$ sufficiently large, there exists a partitioning of $T^n_{p}$ consisting of $M$ subsets $\{S_m\}_{m=1}^{M}$ each of size $S=\lceil 2^{n(I(X:B)+\delta)}\rceil$ (plus a remainder) such that a fraction $(1-\epsilon)$ of the subsets are good sets. 
\end{cor}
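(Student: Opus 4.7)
The plan is to derive Corollary \ref{cor:Cover} from Lemma \ref{lem:Cover} by a uniform random-partition argument combined with the probabilistic method. The heuristic is that Lemma \ref{lem:Cover} controls the probability that a single uniform-without-replacement sample of size $S$ from $T^n_p$ fails to be ``good''; if we slice all of $T^n_p$ into disjoint blocks of size $S$ in a uniformly random way, each block is marginally such a sample, so by a first-moment bound only a vanishing fraction of blocks can be bad.

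Concretely, I would first invoke Lemma \ref{lem:Cover} with a slightly smaller parameter $\delta'<\delta$. Then with $S=\lceil 2^{n(I(X:B)+\delta)}\rceil$ and $t=2^{-(I(X:B)+\delta')}$, the product $St^n$ grows exponentially in $n$ at rate $\delta-\delta'>0$, and the single-block failure probability
\[
\eta_n := 2\,d^n\exp\left(-St^n\frac{\epsilon^2}{288\ln 2}\right)
\]
decays doubly-exponentially, since the negative exponent grows like $-2^{n(\delta-\delta')}$ while the prefactor $d^n$ is only singly-exponential.

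Next, I would draw a uniformly random permutation of the elements of $T^n_p$ and cut the permuted list into $M=\lfloor|T^n_p|/S\rfloor$ consecutive blocks $S_1,\ldots,S_M$ of size $S$, with the final stub of size less than $S$ serving as the announced ``plus a remainder''. Symmetry of the uniform permutation makes each $S_m$ marginally a uniform-without-replacement sample of size $S$ from $T^n_p$, so Lemma \ref{lem:Cover} bounds the probability that any particular $S_m$ is bad by $\eta_n$. Linearity of expectation (which needs no independence between blocks) then bounds the expected number of bad blocks by $M\eta_n$; since $M\leq|T^n_p|\leq 2^{nH(X)}$ is only singly-exponential in $n$ while $\eta_n$ is doubly-exponentially small, $M\eta_n\to 0$. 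Markov's inequality now yields that the fraction of bad blocks exceeds $\epsilon$ with probability at most $\eta_n/\epsilon \to 0$, and the probabilistic method supplies a deterministic partition in which at least a fraction $(1-\epsilon)$ of blocks are good.

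The only real subtlety is the parameter bookkeeping: one must apply Lemma \ref{lem:Cover} with a strictly smaller $\delta'$ than the $\delta$ that fixes the block size $S$, because it is the gap $\delta-\delta'>0$ that drives $St^n$ past $1$ and produces the double-exponential decay needed to beat both the prefactor $d^n$ inside $\eta_n$ and the factor $M$ outside it. Beyond this, no independence hypothesis is required, so the dependence between blocks in a random partition is irrelevant to the argument.
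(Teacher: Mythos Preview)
Your proposal is correct and follows essentially the same random-partition-plus-first-moment strategy as the paper. The paper's version is marginally leaner: it observes directly that if each block is bad with probability less than $\epsilon$, then the expected number of \emph{good} blocks exceeds $M(1-\epsilon)$, so some deterministic partition achieves that count --- no Markov step required. On the other hand, your explicit $\delta'<\delta$ bookkeeping is genuinely needed and the paper glosses over it: without the gap $\delta-\delta'>0$ the product $St^n$ stays $O(1)$ and the $d^n$ prefactor in Lemma~\ref{lem:Cover} would dominate, so the single-block failure probability would not even be small.
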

\begin{proof}
Consider a random partition of $T^n_{p}$ into $M$ blocks each of size $S$.  Note that each block is equivalently obtained by a uniform sampling of $S$ elements from $T^n_{p}$ without replacement.  Let $E_m$ be the random variable for which $E_m=1$ if the $m^{th}$ block is a good set and $E_m=0$ if it is not.  By applying Lemma \ref{lem:Cover}, $n$ can be taken sufficiently large so that the expectation of $E_m$ is greater than $1-\epsilon$.  Therefore for a random partition of $T^n_{p}$, the expected number of good sets across all blocks is given by $\langle \sum_{m=1}^M E_m\rangle= \sum_{m=1}^M\langle E_m\rangle >M(1-\epsilon)$.  Hence, there must exist at least one partition with $M(1-\epsilon)$ of the blocks being good.
\end{proof}

\section{Code Structure}

We combine Corollaries \ref{cor:HSWcover} and \ref{cor:Cover} to obtain the basic structure of both our distillation and formation codes.  A diagram is provided in Fig. \ref{Fig:Anatomy}.

\begin{figure}[t]
\includegraphics[scale=.75]{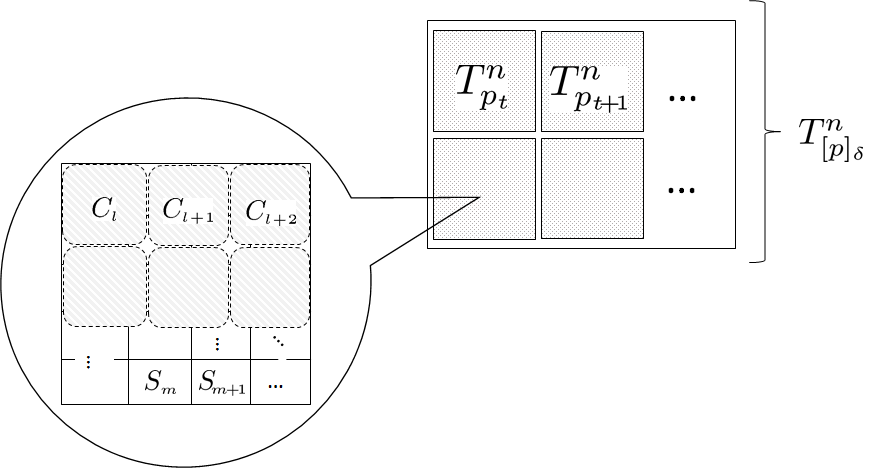}
\caption{\label{Fig:Anatomy} Code Depiction.  Our code involves first decomposing the typical set $T^n_{[p]_\delta}$ into its typical type classes $T^n_{p_t}$, where $p(x)$ is distribution given by $\ket{\Psi}^{AB}=\sum_{x\in\mc{X}}\sqrt{p(x)}\ket{x}^A\ket{\psi_x}^B$ for Alice's incoherent basis $\ket{x}^A$.  Each type class is further decomposed in three different ways.  The first involves a partitioning into obfuscation sets $S_m$ for which Bob's average state is roughly the same when restricting to these sets.  The other two decompositions involve partitioning $T^n_{p_t}$ into codebooks $C_l$ and $\overline{C_l}$ for the channels $\op{x}{x}\to\Delta(\op{\psi_x}{\psi_x})$ and $\op{x}{x}\to\op{\psi_x}{\psi_x}$ respectively.}
\end{figure}

Let $\ket{\Psi}^{AB}$ be an arbitrary bipartite state with 
\begin{equation}
\label{Eq:State}
\ket{\Psi}^{AB}=\sum_{x\in\mc{X}}\sqrt{p(x)}\ket{x}^A\ket{\psi_x}^B,
\end{equation}
where $\{\ket{x}^A\}$ and $\{\ket{y}^B\}$ denote the preferred bases with respect to which the incoherent operations are defined for Alice and Bob, respectively, and $\ket{\psi_x}^B=\sum_{y\in\mc{Y}}e^{i\theta_{y|x}}\sqrt{p(y|x)}\ket{y}^B$ are normalized but not necessarily orthogonal states of Bob.  Let $\mc{W}_{\textrm{CQ}}$ and $\mc{W}_{\textrm{CC}}$ be the CQ and CC channels given by 
\begin{align}
\mc{W}_{\textrm{CQ}}:\op{x}{x}^X&\to\psi_x^B\equiv \op{\psi_x}{\psi_x}^B\notag\\
\mc{W}_{\textrm{CC}}:\op{x}{x}^X&\to\Delta(\psi^B_x)=\sum_{y\in\mc{Y}}p(y|x)\op{y}{y}^Y.\notag
\end{align}
Let $X$ be the random variable taking on values from $\mc{X}$ according to the distribution $p(x)$ in Eq.~\eqref{Eq:State}.  In other words, $p(x)$ describes the distribution of outcomes when measuring $\Delta(\Psi^A)$ in the incoherent basis.  For a fixed $n$, the set of typical sequences $T^n_{[p]_\delta}$ is the union of typical types.  We will denote the typical types by $p_t$, for  $t=1,2,\cdots, T$, and the random variable associated with $p_t$ will be denoted by $X_t$.  Note that  $T\leq (n+1)^{|\mc{X}|}$.

We will be interested in four different $n$-copy decompositions of $\ket{\Psi}^{AB}$, where in all cases we assume that $n$ is being taken sufficiently large.  

\medskip

\noindent\textbf{Decomposition 1:}  The first decomposition is based on the coherence distillation protocol presented in Ref.~\cite{Winter-2015a}.  It involves forming good sets $S_m$ in the sense of Corollary \ref{cor:Cover} and w.r.t. the CQ channel $\mc{W}_{\textrm{CQ}}$.  For each typical type $p_t$, consider a partitioning of $T^n_{p_t}$ according to Corollary \ref{cor:Cover}.  Then for each each $x^n\in T^n_{[p]_\delta}$, we can relabel $x^n\to(t,m,s)$ where $p_t$ is the typical type for which $x^n\in T^n_{p_t}$; $m$ is the block number within $T^n_{p_t}$ for which $x^n$ belongs (with $m=0$ labeling the small remainder block); and $s$ is the order of $x^n$ in the $m^{th}$ block.  For a fixed $t$, the range of $m$ and $s$ is $m=1,\cdots ,M_t$ and $s=1,\cdots S_t$, where 
\begin{align}
M_t&=\lfloor |T^n_{p_t}|/S_t\rfloor, & S_t&=\lceil 2^{n(I(X_t:B)_{\mc{W}_{\mathrm{CQ}}}+\delta)}\rceil.
\end{align}
The bit rates of $S_t$ and $M_t$ satisfy
\begin{align}
\bigg|\frac{1}{n}\log S_t-\mathrm{E}(\Psi)\bigg|&\leq O(\tau(\delta))\\
\bigg|\frac{1}{n}\log M_t-[S(A)_{\Delta(\Psi)}-\mathrm{E}(\Psi)]\bigg|&\leq O(\tau(\delta),\frac{\log n}{n}).
\end{align}
The first line follows from Fannes' Inequality and the fact that $I(X:B)_\Psi=\mathrm{E}(\Psi)$, while the second can be seen from $\frac{1}{n}\log |T^n_{p_t}|\approx H(X)$ and $\frac{1}{n}\log S_t\approx I(X:B)_\Psi$.  Since all but a vanishing small fraction of $x^n$ belong to $T^n_{[X]_\delta}$, we can thus write 
\begin{align}
\ket{\Psi}^{\otimes n}&=\sum_{x^n}\sqrt{p^n(x^n)}\ket{x^n}\ket{\psi_{x^n}}\notag\\
&\overset{\epsilon}{\approx}\sum_{t=1}^T \sqrt{q(t)}\ket{t}^{A_1}\frac{1}{\sqrt{M_t}}\sum_{m=0}^{M_t}\ket{m}^{A_2}\frac{1}{\sqrt{S_t}}\sum_{s=1}^{S_t}\ket{s}^{A_3}\ket{\psi_{tms}}^{B}\label{Eq:Decomp1a}
\end{align}
where $q(t)$ is the probability of typical type class $T^n_{p_t}$ (conditioned on the event $x^n\in T^n_{[p]_\delta}$).  Note that for a sequence $x^n$ labeled by $(t,m,s)$ we have that $S(\Delta(\psi_{x^n}))=S(\Delta(\psi_{tms}))=\sum_{x\in\mc{X}}N(x|x^n\in T^n_{p_t})S(\Delta(\psi_x))$, with the RHS being independent of $s$ and $m$.  Since $S(Y|X)_{\Delta(\Psi)}=\sum_{x\in\mc{X}}p(x)S(\Delta(\psi_x))$, the following bound is obtained,
\begin{equation}
\bigg|\frac{1}{n}S(\Delta(\psi_{x^n}))-S(Y|X)_{\Delta(\Psi)}\bigg|\leq \delta \sum_{x\in\mc{X}}S(\Delta(\psi_x)),
\end{equation}
which again follows from $\delta$-typicality.  For each typical type $p_t$, we now further restrict the sum over $m$ to only those values for which $S_m$ are good sets.  By Corollary \ref{cor:Cover} there are $M_t(1-\epsilon)$ such sets.  For these values of $m$, we have that 
\begin{equation}
\label{Eq:TypeCover}
\frac{1}{S_t}\sum_{s=1}^{S_t}\psi_{tms}\overset{\epsilon}{\approx}\frac{1}{|T^n_{p_t}|}\sum_{x^n\in T^n_{p_t}}\psi_{x^n},
\end{equation}
with the RHS being independent of $m$.  As Uhlmann's Theorem states that $F(\rho_1,\rho_2)=\max|\ip{\varphi_1}{\varphi_2}|$, where the maximization is taken over all purifications of $\rho_1$ and $\rho_2$ respectively \cite{Uhlmann-1976a, Jozsa-1994a}, the previous equation implies for each pair $(t,m)$ the existence of a unitary $U_{tm}$ acting on $A_3$ such that
\begin{align}
\frac{1}{\sqrt{S_t}}\sum_{s =1 }^{S_t}\ket{s}^{A_3}\ket{\psi_{tms}}^B&\overset{O(\epsilon)}{\approx}\frac{1}{\sqrt{S_t}}\sum_{s=1}^{S_t}(U_{tm}\ket{s}^{A_3})\ket{\psi_{tm_0s}}^{B}\label{Eq:Uhlmann1},
\end{align}
where $m_0\in \{1,\cdots ,M_t(1-\epsilon)\}$ is some fixed number.  We thus continue Eq.~\eqref{Eq:Decomp1a} by restricting the sum over $m$ to just good values and replacing the sum over $s$ with Eq.~\eqref{Eq:Uhlmann1}:
\begin{align}
\ket{\Psi}^{\otimes n}&\overset{O(\epsilon)}{\approx}\sum_{t=1}^T \sqrt{q(t)}\ket{t}^{A_1}\frac{1}{\sqrt{M_t(1-\epsilon)}}\sum_{m=1}^{M_t(1-\epsilon)}\ket{m}^{A_2}\frac{1}{\sqrt{S_t}}\sum_{s=1}^{S_t}\left(U_{tm}\ket{s}^{A_3}\right)\ket{\psi_{t m_0s}}^B\label{Eq:decomp1}.
\end{align}

\medskip

\noindent\textbf{Decompositions 2 and 3:}  The second and third decompositions are built from $(n,\epsilon)$ codes for the channels $\mc{W}_{\textrm{CC}}$ and $\mc{W}_{\textrm{CQ}}$ respectively.  The structure of the decompositions is based on the entanglement-assisted and GHZ distillation schemes of Ref.~\cite{Smolin-2005a}.  

First we turn to $\mc{W}_{\textrm{CC}}$.  For every typical type $p_t$, consider a partitioning of $T^n_{p_t}$ according to Corollary \ref{cor:HSWcover}.  Then each $x^n\in T^n_{[p]_\delta}$ can be relabeled $x^n\to(t,l,c)$ where $p_t$ is the typical type for which $x^n\in T^n_{p_t}$; $l$ is the code $C_l$ within $T^n_{p_t}$ for which $x^n$ belongs; and $c$ is the order of $x^n$ in the $l^{th}$ code.  For a fixed $t$, the range of $l$ and $c$ is $l=1,\cdots ,L_t$ and $c=1,\cdots C_t$, where 
\begin{align}
L_t&=\lfloor 2^{n(H(X_t)-I(X_t:Y)_{\mc{W}_{\rm CC}}+2\delta)}\rfloor, & C_t&=\lceil 2^{n(I(X_t:Y)_{\mc{W}_{\rm CC}}-\delta)}\rceil.
\end{align}
The bit rates of $L_t$ and $C_t$ satisfy
\begin{align}
\bigg|\frac{1}{n}\log L_t-[H(X)-I(X:Y)_{\Delta(\Psi)}]\bigg|&\leq O(\tau(\delta),\frac{\log n}{n})\\
\bigg|\frac{1}{n}\log C_t-I(X:Y)_{\Delta(\Psi)}\bigg|&\leq O(\tau(\delta)).
\end{align}
By discarding non-typical sequences and the fraction $3\epsilon$ of $x^n$ not belonging to an $(n,\epsilon)$ code, we obtain the approximation
\begin{align}
\ket{\Psi}^{\otimes n}
&\overset{O(\epsilon)}{\approx}\sum_{t=1}^T \sqrt{q(t)}\ket{t}^{A_1}\frac{1}{\sqrt{L_t(1-3\epsilon)}}\sum_{l=1}^{L_t(1-3\epsilon)}\ket{l}^{A_2}\frac{1}{\sqrt{C_t}}\sum_{c=1}^{C_t}\ket{c}^{A_3}\ket{\psi_{tlc}}^B.
\end{align}
For every $(t,l)$ define the state $\ket{\chi_{tl}}=\frac{1}{\sqrt{C_t}}\sum_{c=1}^{C_t}\ket{c}^{A_3}\ket{\psi_{tlc}}^B$.  By $\epsilon$-decodability of the channel $\mc{W}_{\rm CC}$ there exists a family of decoding POVMs $(D^{(tl)}_c)_{c=1}^{C_t}$ such that
\[\frac{1}{C_t}\sum_{c=1}^{C_t}\tr[\Delta(\psi_{tlc})D^{(tl)}_c]>1-\epsilon.\]
Note that $\tr[\Delta(\psi_{tlc})D^{(tl)}_c]=\tr[\Delta(\psi_{tlc})\Delta(D^{(tl)}_c)]$, and so without loss of generality, we can assume that the $D_c^{(tl)}$ are diagonal in the incoherent basis.  We consider a dilation of the $(t,l)^{th}$ POVM.  To do so, introduce the isometries $Y:B\to B B_1$ and $W_{tl}:B\to B B_2$ as
\begin{align}
Y&=\sum_{y^n\in\mc{Y}^n}\ket{y^n}^{B}\bra{y^n}^{B}\otimes \ket{y^n}^{B_1}\notag\\
W_{tl}&=\sum_{c=1}^{C_t}\sqrt{D_c^{(tl)}}\otimes\ket{c}^{B_2}.
\label{Eq:POVMisometry}
\end{align}
Crucially, both $Y$ and $W_{tl}$ represent incoherent operations.  Define the state
\begin{align}
\ket{\widehat{\chi}_{tl}}&=(\mbb{I}^{A_3}\otimes W_{tl}Y)\ket{\chi_{tl}}\notag\\
&=\frac{1}{\sqrt{C_t}}\sum_{c=1}^{C_t}\ket{c}^{A_3}\sum_{y^n\in\mc{Y}^n}\sum_{c'=1}^{C_t}\ip{y^n}{\psi_{tlc}}\sqrt{D^{(tl)}_{c'}}\ket{y^n}^{B}\ket{y^n}^{B_1}\ket{c'}^{B_2}.
\end{align}
We want to show that this state is $\epsilon$-close to the state
\begin{align}
\ket{\widehat{\widehat{\chi}}_{tl}}&=\frac{1}{\sqrt{C_t}}\sum_{c=1}^{C_t}\ket{c}^{A_3}\sum_{y^n\in\mc{Y}^n}\ip{y^n}{\psi_{tlc}}\ket{y^n}^{B}\ket{y^n}^{B_1}\ket{c}^{B_2},
\end{align}
which would imply that Bob coherently decode $\ket{c}$ from $\ket{\chi_{tl}}$ without  disturbing the state that much.  To this end, first note that $\ket{\widehat{\chi}_{tl}}\overset{O(\epsilon)}{\approx}(\mbb{I}^{A_3}\otimes\sqrt{X_{tl}})\ket{\widehat{\widehat{\chi}}_{tl}}$ where
\[X_{tl}=\sum_{c=1}^{C_t} D_c^{(tl)}\otimes\mbb{I}^{B_1}\otimes\op{c}{c}^{B_2}.\]
The approximation $\overset{O(\epsilon)}{\approx}$ here can be seen from the fact that
\[\bra{\widehat{\chi}_{tl}}(\mbb{I}^{A_3}\otimes\sqrt{X_{tl}})\ket{\widehat{\widehat{\chi}}_{tl}}=\frac{1}{C_t}\sum_{c=1}^{C_t}\tr[\Delta(\psi_{tlc})D^{(tl)}_c]>1-\epsilon.\]
Then applying Lemma \ref{Lem:gentle} to $\tr[(\mbb{I}^{A_3}\otimes X_{tl})\widehat{\widehat{\chi}}_{tl}]>1-\epsilon$, we can conclude that $(\mbb{I}^{A_3}\otimes\sqrt{X_{tl}})\ket{\widehat{\widehat{\chi}}_{tl}}\overset{O(\epsilon)}{\approx}\ket{\widehat{\widehat{\chi}}_{tl}}$.  Therefore, $\ket{\widehat{\chi}_{tl}}\overset{O(\epsilon)}{\approx}\ket{\widehat{\widehat{\chi}}_{tl}}$ and so
\begin{align}
\ket{\Psi}^{\otimes n}&\overset{O(\epsilon)}{\approx}\sum_{t=1}^T \sqrt{q(t)}\ket{t}^{A_1}\frac{1}{\sqrt{L_t(1-3\epsilon)}}\sum_{l=1}^{L_t(1-3\epsilon)}\ket{l}^{A_2}\frac{1}{\sqrt{C_t}}\sum_{c=1}^{C_t}\ket{c}^{A_3}\sum_{y^n\in\mc{Y}^n}\ip{y^n}{\psi_{tlc}}W_{tl}^\dagger \left(\ket{y^n}^{B}\ket{c}^{B_2}\right)\notag\\
&=\sum_{t=1}^T \sqrt{q(t)}\ket{t}^{A_1}\frac{1}{\sqrt{L_t(1-3\epsilon)}}\sum_{l=1}^{L_t(1-3\epsilon)}\ket{l}^{A_2}\frac{1}{\sqrt{C_t}}\sum_{c=1}^{C_t}\ket{c}^{A_3}W_{tl}^\dagger \left(\ket{\psi_{tlc}}^{B}\ket{c}^{B_2}\right)\notag\\
&=\sum_{t=1}^T \sqrt{q(t)}\ket{t}^{A_1}\frac{1}{\sqrt{L_t(1-3\epsilon)}}\sum_{l=1}^{L_t(1-3\epsilon)}\ket{l}^{A_2}\frac{1}{\sqrt{C_t}}\sum_{c=1}^{C_t}\ket{c}^{A_3}W_{tl}^\dagger \left(\Pi_{tlc}\ket{\psi_{tl_0c_0}}^{B}\ket{c}^{B_2}\right),
\label{Eq:decomp3}
\end{align}
where $\Pi_{tlc}$ permutes $\ket{\psi_{tl_0c_0}}$ into $\ket{\psi_{tlc}}$, for some fixed $l_0\in 1,\cdots, L_t$ and $c_0\in 1,\cdots, C_t$.  Recall that for each type $t$, each $\ket{\psi_{tlc}}$ is a sequence $\ket{\psi_{tlc}}=\ket{\psi_{x_1}}\ket{\psi_{x_2}}\cdots\ket{\psi_{x_n}}$ related to one another through a permutation of the $\ket{\psi_{x_i}}$.

We now repeat an analogous decomposition for the CQ channel $\mc{W}_{\textrm{CQ}}$.  Since this will involve a different covering of the type classes we use a different labeling $x^n\to(t,\overline{l},\overline{c})$.  By the same arguments as above, the decomposition takes the form
\begin{align}
\ket{\Psi}^{\otimes n}&\overset{O(\epsilon)}{\approx}\sum_{t=1}^T \sqrt{q(t)}\ket{t}^{A_1}\frac{1}{\sqrt{\ovl{L_t}(1-3\epsilon)}}\sum_{\ovl{l}=1}^{\ovl{L_t}(1-3\epsilon)}\ket{\ovl{l}}^{A_2}\frac{1}{\sqrt{\ovl{C_t}}}\sum_{\ovl{c}=1}^{\ovl{C_t}}\ket{\ovl{c}}^{A_3}\ovl{W}_{t\ovl{l}}^\dagger \left(\Pi_{t\ovl{l}\ovl{c}}\ket{\psi_{t\ovl{l_0}\ovl{c_0}}}^{B}\ket{\ovl{c}}^{B_2}\right).
\label{Eq:decomp4}
\end{align}
Here, like before, $\ovl{W}_{tl}$ is an isometry for the $(t,l)^{th}$ decoding POVM of $\mc{W}_{\textrm{CQ}}$ as in Eq.~\eqref{Eq:POVMisometry}.  However $\ovl{W}_{tl}$ will in general not be incoherent.

\medskip

\noindent\textbf{Decomposition 4:}  The fourth decomposition is a hybrid of decompositions 1 and 2.  It begins with Eq.~\eqref{Eq:Decomp1a} and the fact that the sum over $m$ includes $M_t(1-\epsilon)$ good sets in the sense that
\begin{equation}
\frac{1}{S_t}\sum_{s=1}^{S_t}\psi_{tms}\overset{\epsilon}{\approx}\frac{1}{|T^n_{p_t}|}\sum_{x^n\in T^n_{p_t}}\psi_{x^n}
\end{equation}
for these good values of $m$.  In this decomposition, we now replace the RHS by $\mc{W}_{\rm CC}$ channel codes.  That is, we use Corollary \ref{cor:HSWcover} to write
\[\frac{1}{S_t}\sum_{s=1}^{S_t}\psi_{tms}\overset{\epsilon}{\approx}\frac{1}{|T^n_{p_t}|}\sum_{x^n\in T^n_{p_t}}\psi_{x^n}=\frac{1}{L_tC_t}\sum_{l=1}^{L_t}\sum_{c=1}^{C_t} \psi_{tlc}.\]
Uhlmann's Theorem again implies that for each good value of $m$ there exists a right orthogonal matrix $V_{tm}:A_3A_4\to A_3$ (with $V_{tm}V_{tm}^\dagger=\mbb{I}^{A_3}$)
 such that 
\begin{align}
\frac{1}{\sqrt{S_t}}\sum_{s =1 }^{S_t}\ket{s}^{A_3}\ket{\psi_{tms}}^B&\overset{O(\epsilon)}{\approx}\frac{1}{\sqrt{L_tT_t}}\sum_{l=1}^{L_t}\sum_{c=1}^{C_t} \left(V_{tm}\ket{lc}^{A_3A_4}\right)\psi_{tlc}^B.\label{Eq:Uhlmann2}
\end{align}
Hence by restricting to good values of $m$ and $(n,\epsilon)$ channel codes, we have the decomposition
\begin{align}
\ket{\Psi}^{\otimes n}&\overset{O(\epsilon)}{\approx}\sum_{t=1}^T \sqrt{q(t)}\ket{t}^{A_1}\frac{1}{\sqrt{M_t(1-\epsilon)}}\sum_{m=1}^{M_t(1-\epsilon)}\ket{m}^{A_2}
\frac{1}{\sqrt{L_tC_t(1-3\epsilon)}}\sum_{l=1}^{L_t(1-3\epsilon)}\sum_{c=1}^{C_t}\left(V_{tm}\ket{lc}^{A_3A_4}\right)\ket{\psi_{tlc}}^B.
\end{align}
Finally, similar to the construction in decomposition 2, decoding isometries $W_{tl}$ exist for Bob so that the state can be expressed as
\begin{align}
\ket{\Psi}^{\otimes n}&\overset{O(\epsilon)}{\approx}\sum_{t=1}^T \sqrt{q(t)}\ket{t}^{A_1}\frac{1}{\sqrt{M_t(1-\epsilon)}}\sum_{m=1}^{M_t(1-\epsilon)}\ket{m}^{A_2}
\frac{1}{\sqrt{L_tC_t(1-3\epsilon)}}\sum_{l=1}^{L_t(1-3\epsilon)}\sum_{c=1}^{C_t}\left(V_{tm}\ket{lc}^{A_3A_4}\right)W_{tl}^\dagger \left(\Pi_{tlc}\ket{\psi_{tl_0c_0}}^{B}\ket{c}^{B_2}\right),\label{Eq:decomp2}
\end{align}
for some fixed $(l_0,c_0)$.

\medskip

\noindent\textbf{Summary of Code Construction:}

The four decompositions we will use are given by Eqns. \eqref{Eq:decomp1}, \eqref{Eq:decomp3}, \eqref{Eq:decomp4}, and \eqref{Eq:decomp2}.  For convenience, we recall them here:
\begin{subequations}
\begin{align}
\ket{\Psi}^{\otimes n}&\overset{O(\epsilon)}{\approx}\sum_{t=1}^T \sqrt{q(t)}\ket{t}^{A_1}\frac{1}{\sqrt{M_t(1-\epsilon)}}\sum_{m=1}^{M_t(1-\epsilon)}\ket{m}^{A_2}\frac{1}{\sqrt{S_t}}\sum_{s=1}^{S_t}\left(U_{tm}\ket{s}^{A_3}\right)\ket{\psi_{t m_0s}}^B,\label{Eq:decomp1a}\\
\ket{\Psi}^{\otimes n}&\overset{O(\epsilon)}{\approx}\sum_{t=1}^T \sqrt{q(t)}\ket{t}^{A_1}\frac{1}{\sqrt{L_tC_t(1-3\epsilon)}}\sum_{l=1}^{L_t(1-3\epsilon)}\sum_{c=1}^{C_t}\ket{l}^{A_2}\ket{c}^{A_3}W_{tl}^\dagger \left(\Pi_{tlc}\ket{\psi_{tl_0c_0}}^{B}\ket{c}^{B_2}\right),\label{Eq:decomp3c}\\
\ket{\Psi}^{\otimes n}&\overset{O(\epsilon)}{\approx}\sum_{t=1}^T \sqrt{q(t)}\ket{t}^{A_1}\frac{1}{\sqrt{\ovl{L_t}\ovl{C_t}(1-3\epsilon)}}\sum_{\ovl{l}=1}^{\ovl{L_t}(1-3\epsilon)}\sum_{\ovl{c}=1}^{\ovl{C_t}}\ket{\ovl{l}}^{A_2}\ket{\ovl{c}}^{A_3}\ovl{W}_{t\ovl{l}}^\dagger \left(\Pi_{t\ovl{l}\ovl{c}}\ket{\psi_{t\ovl{l_0}\ovl{c_0}}}^{B}\ket{\ovl{c}}^{B_2}\right),\label{Eq:decomp4d}\\
\ket{\Psi}^{\otimes n}&\overset{O(\epsilon)}{\approx}\sum_{t=1}^T \sqrt{q(t)}\ket{t}^{A_1}\frac{1}{\sqrt{M_t(1-\epsilon)}}\sum_{m=1}^{M_t(1-\epsilon)}\ket{m}^{A_2}
\frac{1}{\sqrt{L_tC_t(1-3\epsilon)}}\sum_{l=1}^{L_t(1-3\epsilon)}\sum_{c=1}^{C_t}\left(V_{tm}\ket{lc}^{A_3A_4}\right)W_{tl}^\dagger \left(\Pi_{tlc}\ket{\psi_{tl_0c_0}}^{B}\ket{c}^{B_2}\right),\label{Eq:decomp2b}
\end{align}
\end{subequations}
with bit rates
\begin{align}
\frac{1}{n}\log T& \leq O(\tfrac{\log n}{n}),\label{Eq:rate-T}\\
\bigg|\frac{1}{n}\log S_t-\mathrm{E}(\Psi)\bigg|&\leq O(\tau(\delta)),\label{Eq:rate-St}\\
\bigg|\frac{1}{n}\log M_t-[S(X)_{\Delta(\Psi)}-\mathrm{E}(\Psi)]\bigg|&\leq O(\tau(\delta),\frac{\log n}{n}),\label{Eq:rate-Mt}\\
\bigg|\frac{1}{n}\log L_t-S(X|Y)_{\Delta(\Psi)}\bigg|&\leq O(\tau(\delta),\frac{\log n}{n}),\label{Eq:rate-Lt}\\
\bigg|\frac{1}{n}\log C_t-I(X:Y)_{\Delta(\Psi)}\bigg|&\leq O(\tau(\delta)),\label{Eq:rate-Ct}\\
\bigg|\frac{1}{n}S(\Delta(\psi_{x^n}))-S(Y|X)_{\Delta(\Psi)}\bigg|&\leq O(\delta),\label{Eq:rate-Bob-psi}\\
\bigg|\frac{1}{n}\log \overline{L_t}-[S(X)_{\Delta(\Psi)}-\mathrm{E}(\Psi)]\bigg|&\leq O(\tau(\delta),\frac{\log n}{n}),\label{Eq:rate-qLt}\\
\bigg|\frac{1}{n}\log \overline{C_t}-\mathrm{E}(\Psi)\bigg|&\leq O(\tau(\delta))\label{Eq:rate-qCt}.
\end{align}

\section{Proofs of Main Text Theorems/Lemmas and Expanded Discussion}

\subsection{Proof of Theorem 1}

\begin{thm1}
\label{thm1}
For a pure state $\ket{\Psi}^{AB}$ the following triples are achievable coherence-entanglement formation rates
\begin{align}
(R_A,R_B,E^{co})&=\left(\;0,\;S(Y|X)_{\Delta(\Psi)}\;,\; S(X)_{\Delta(\Psi)}\right)\label{Eq:CostTriple1}\\
(R_A,R_B,E^{co})&=\left(S(X)_{\Delta(\Psi)},\;S(Y|X)_{\Delta(\Psi)},\; \mathrm{E}(\Psi)\right)\label{Eq:CostTriple2}\\
(R_A,R_B,E^{co})&=\left(0,\;0,\; S(XY)_{\Delta(\Psi)}\right)\label{Eq:CostTriple3}
\end{align}
as well as the points obtained by interchanging $X\leftrightarrow Y$ in Eqns. \eqref{Eq:CostTriple1} -- \eqref{Eq:CostTriple3}.  Moreover, these points are optimal in the sense that any achievable rate triple must satisfy
\begin{align}
E^{co}&\geq \mathrm{E}(\Psi)&R_A+R_B&\geq S(XY)_{\Delta(\Psi)}&R_B+E^{co}&\geq S(XY)_{\Delta(\Psi)}.
\label{Eq:CostLB}
\end{align}
\end{thm1}
\begin{example}
Before getting to the proof, we give an example of the resource trade-off provided by Theorem \ref{thm1}.  Consider the formation of $\ket{\Psi}^{AB}=\sqrt{\lambda}\ket{+}\ket{0}+\sqrt{1-\lambda}\ket{-}\ket{1}$ for $0<\lambda<1/2$.  This state has entanglement $\mathrm{E}(\Psi)=S(A)_\Psi<1$.  The lower bounds of Theorem \ref{thm1} say that any formation protocol that has $E^{co}=\mathrm{E}(\Psi)$ must have coherence rates satisfying $R_B\geq S(AB)_{\Delta(\Psi)}-\mathrm{E}(\Psi)=1$.  However, if Alice and Bob would rather use more eCoBits than CoBits, they can reduce Bob's local coherence rate.  Namely, rate \eqref{Eq:CostTriple1} gives $R_A=0$ and $R_B=S(Y|X)_{\Delta(\Psi)}<1$.
\end{example}

\begin{proof}
The lower bounds of Eq. \eqref{Eq:CostLB} follow from the coherence cost rates using global operations (Theorem 3 of \cite{Winter-2015a} as well as \cite{Yuan-2015a} and Theorem \ref{thm5} below), and the fact that one eCoBit can be converted into one CoBit using LIOCC.  Indeed, $\ket{\Phi_{AB}}\to\ket{\Phi_B}$ when Alice performs the incoherent measurement with Kraus operators $\{\op{0}{+},\op{1}{-}\}$, and then Bob performs $\sigma_Z$ iff Alice obtains outcome $\ket{1}$.

Moving to achievability, we first prove Eq.~\eqref{Eq:CostTriple1}.  The protocol is based on decomposition \eqref{Eq:decomp3c}.  Alice and Bob share $\log E_n$ eCoBits expressed as
\begin{equation}
\label{Eq:EntanglementInitial}
\frac{1}{\sqrt{E_n}}\sum_{t=1}^T\sum_{l=1}^{L_t(1-3\epsilon)}\sum_{c=1}^{C_t}\ket{tt}^{A_1B_0}\ket{ll}^{A_2B_1}\ket{cc}^{A_3B_2},
\end{equation}
where $E_n=\prod_{t=1}^TL_t(1-3\epsilon)C_t$, while Bob has an additional $n[ S(Y|X)_{\Delta(\Psi)}+O(\delta)]$ CoBits.  In the first step, Alice and Bob deterministically transform their state into
\[\sum_{t=1}^T\sqrt{q(t)}\frac{1}{\sqrt{L_t(1-3\epsilon)}}\sum_{l=1}^{L_t(1-3\epsilon)}\frac{1}{\sqrt{C_t}}\sum_{c=1}^{C_t}\ket{tt}^{A_1B_0}\ket{ll}^{A_2B_1}\ket{cc}^{A_3B_2}\]
where $q(t)$ is given in Eq. \eqref{Eq:decomp3c}.  This can always be done via the majorization criterion of Lemma \ref{lem:majorization} below.  Using his local CoBits, Bob first prepares $\ket{\psi_{tl_0c_0}}^B$ on ancillary system $B$, which can be done to arbitrary precision for sufficiently large $n$ (Theorem 3 of \cite{Winter-2015a} as well as \cite{Yuan-2015a} and Remark \ref{remark:unitary2} below).  Next he performs the incoherent unitary $W^\dagger_{tl}\Pi_{tlc}$ conditioned on $\ket{t}^{B_0}\ket{l}^{B_1}\ket{c}^{B_2}$.  More precisely, $\Pi_{tlc}$ is performed on system $B$ conditioned on $B_0B_1B_2$, and $W^{\dagger}_{tl}$ is performed on $BB_2$ conditioned on $B_0B_1$.  Finally, he decouples his registers $B_0B_1$.  To accomplish this, he performs a generalized incoherent measurement $\{K_{t,l}=\op{00}{\gamma_{t,l}}^{B_0B_1}\}_{t,l=1}^{T,L_t(1-3\epsilon)}$ where $\ket{\gamma_{t,l}}=\sum_{t'=1}^{T}\sum_{l'=1}^{L_t(1-3\epsilon)}e^{i 2\pi \left(\tfrac{(t-1) (t'-1)}{T}+\tfrac{(l-1) (l'-1)}{L_t(1-3\epsilon)}\right)}\ket{t'}\ket{l'}$.  For outcome $K_{t,l}$, Bob announces the result to Alice, and she performs the incoherent unitary
\begin{equation}
\label{Eq:AliceErrorCorrect}
U_{tl}=\sum_{t'=1}^{T}\sum_{l'=1}^{L_t(1-3\epsilon)}e^{-i 2\pi \left(\tfrac{(t-1) (t'-1)}{T}+\tfrac{(l-1) (l'-1)}{L_t(1-3\epsilon)}\right)}\op{t}{t'}^{A_1}\op{l}{l'}^{A_2}.
\end{equation}
The desired state is thus obtained: 
\[\sum_{t=1}^T \sqrt{q(t)}\ket{t}^{A_1}\frac{1}{\sqrt{L_t(1-3\epsilon)}}\sum_{l=1}^{L_t(1-3\epsilon)}\ket{l}^{A_2}\frac{1}{\sqrt{C_t}}\sum_{c=1}^{C_t}\ket{c}^{A_3}W_{tl}^\dagger \left(\Pi_{tlc}\ket{\psi_{tl_0c_0}}^{B}\ket{c}^{B_2}\right).\]
Asymptotically, the consumption rates of entanglement and coherence approach Eq. \eqref{Eq:CostTriple1}.

Now we prove achievability of Eq.~\eqref{Eq:CostTriple2}.  The protocol is based on decomposition \eqref{Eq:decomp1a}.  Alice and Bob share $\log E'_n$ eCoBits expressed as
\begin{equation}
\frac{1}{\sqrt{E'_n}}\sum_{t=1}^T\sum_{s=1}^{S_t}\ket{tt}^{A_1B_1}\ket{ss}^{A_3B_3},
\end{equation}
where $E'_n=\prod_{t=1}^TS_t$, while Alice has an additional $n[ S(X)_{\Delta(\Psi)}+O(\tau(\delta),\frac{\log n}{n})]$ CoBits and Bob has $n[ S(Y|X)_{\Delta(\Psi)}+O(\delta) ]$ CoBits.  In the first step of the protocol, Alice and Bob again deterministically transform their entanglement into 
\[\sum_{t=1}^T\sqrt{q(t)}\ket{tt}^{A_1B_1}\frac{1}{\sqrt{S_t}}\sum_{s=1}^{S_t}\ket{ss}^{A_3B_3}\]
using LIOCC.  Using local CoBits, Bob prepares $\ket{\psi}^{B}_{tm_{0}s}$, conditioned on $\ket{ts}^{B_1B_3}$.  After this, he decouples his registers $B_1B_3$ using a measurement described above, and Alice performs a suitable incoherent unitary similar to Eq.~\eqref{Eq:AliceErrorCorrect}.  At this point, Alice and Bob share
\[\sum_{t=1}^T\sqrt{q(t)}\ket{t}^{A_1}\frac{1}{\sqrt{S_t}}\sum_{s=1}^{S_t}\ket{s}^{A_3}\ket{\psi_{tm_{0}s}}^{B}.\]
Next, Alice splits her coherence into two parts $\ket{\kappa_1}\ket{\kappa_2}$, where $\ket{\kappa_i}=\frac{1}{\sqrt{\kappa_i}}\sum_{x=1}^{\kappa_i}\ket{x}$ for $\kappa_1=n[S(X)_{\Delta(\Psi)}-\mathrm{E}(\Psi)+O(\tau(\delta),\frac{\log n}{n})]$ and $\kappa_2=n[\mathrm{E}(\Psi)+O(\tau(\delta),\frac{\log n}{n})]$.  Using $\ket{\kappa_1}$, she implements a rotation $\ket{0}\to \frac{1}{\sqrt{M_t(1-\epsilon)}}\sum_{m=1}^{M_t}\ket{m}$, conditioned on $\ket{t}$.  Using $\ket{\kappa_2}$, she implements another unitary rotation $\ket{s}\to U_{tm}\ket{s}$ conditioned on $\ket{t}\ket{m}$, for $s=1,\cdots, S_t$.  The desired state is thus obtained: 
\[\sum_{t=1}^T \sqrt{q(t)}\ket{t}^{A_1}\frac{1}{\sqrt{M_t(1-\epsilon)}}\sum_{m=1}^{M_t(1-\epsilon)}\ket{m}^{A_2}\frac{1}{\sqrt{S_t}}\sum_{s=1}^{S_t}\left(U_{tm}\ket{s}^{A_3}\right)\ket{\psi_{t m_0s}}^B.\]
Asymptotically, the consumption rates of entanglement and coherence approach Eq. \eqref{Eq:CostTriple2}.

Finally, the achievability of Eq.~\eqref{Eq:CostTriple3} follows from Eq.~\eqref{Eq:CostTriple1} and the fact that every eCoBit can be deterministically transformed into a CoBit for Bob using LIOCC.

\end{proof}

\subsection{Proof of Lemmas 2 and 3}

\begin{lem1}
\label{lem2}
An arbitrary $d\times d$ unitary operator $U$ can be performed on a system using incoherent operations and $\lceil \log d\rceil$ CoBits.
\end{lem1}

\begin{proof}
Let us introduce an orthonormal basis $\{\ket{b_{jk}}\}_{j,k=0}^d$ for a $d\otimes d$ bipartite system $SS'$ consisting of maximally entangled states 
\begin{align}
\ket{b_{jk}}^{SS'}=\mbb{I}\otimes W_{jk}\ket{\Phi^{(d)}_{SS'}}
\end{align} 
where 
\[\ket{\Phi^{(d)}_{SS'}}=\frac{1}{\sqrt{d}}\sum_{i=0}^{d-1}\ket{i}^{S}\otimes\ket{i}^{S'},\qquad W_{jk}=\sum_{l=0}^{d-1}\tau^{lj}\op{k+l}{l},\]
with $\tau=e^{2\pi i/d}$ and addition is modulo $d$.  The $\{\ket{b_{jk}}\}$ generalize the Bell basis in higher dimensions.  Note that each unitary $W_{jk}$ is an incoherent operation.

Suppose now that we wish to perform an arbitrary $d\otimes d$ unitary $U$ on some state $\ket{\psi}$.  We can accomplish this incoherently using $\lceil \log d\rceil$ CoBits through the following procedure.  First, let $\ket{\psi}$ belong to system $S$, and express $\ket{\psi}^S=[\psi]\ket{\Phi^{(d)}_S}$, where $[\psi]$ is a $d\times d$ complex matrix and $\ket{\Phi^{(d)}_S}=1/\sqrt{d}\sum_{i=0}^{d-1}\ket{i}^S$.  We next introduce $\lceil \log d\rceil$ CoBits $\ket{\Phi_{S'}}^{\otimes \lceil \log d\rceil}$ on system $S'$.  This is deterministically transformed into  $\ket{\Phi^{(d)}_{S'}}=1/\sqrt{d}\sum_{i=0}^{d-1}\ket{i}^{S'}$, which can always be accomplished since $\ket{\Phi^{(d)}_{S'}}$ majorizes $\ket{\Phi_{S'}}^{\otimes \lceil \log d\rceil}$ (see Theorem 1 of \cite{Winter-2015a} as well as Ref. \cite{Chitambar-2016a}).  An additional system $S''$ is then introduced in state $\ket{0}^{S''}$ and an entangling incoherent operation is performed to obtain
\[
\ket{\Phi_{S'}^{(d)}}\ket{0}\to\ket{\Phi^{(d)}_{S'S''}}.
\]
Thus at this point the state across all three systems 
\begin{equation}
\label{Eq:PreMeasureState1}
\ket{\psi}^S\ket{\Phi^{(d)}_{S'S''}}=\frac{1}{d}\sum_{i,j=0}^{d-1}\left([\psi]\ket{i}^S\right)\otimes \ket{j}^{S'}\otimes\ket{j}^{S''}.
\end{equation}
An incoherent measurement $\{M_{jk}\}_{j,k=0}^{d-1}$ is then performed on systems $SS'$ given by
\begin{equation}
\label{Eq:Kraus-Unitary-Implementation}
M_{jk}=\ket{jk}\left(\bra{b_{jk}}U^{S}\otimes \mbb{I}^{S'} \right)=\op{jk}{\Phi_{SS'}^{(d)}}\mbb{I}\otimes U^T W^\dagger_{jk}.
\end{equation}
From Eq.~\eqref{Eq:PreMeasureState1}, we see that outcome $jk$ generates the (unnormalized) post-measurement state
\begin{align}
\frac{1}{d}\sum_{m,n=0}^{d-1}&\op{jk}{\Phi_{SS'}^{(d)}}[\psi] \otimes U^T W^\dagger_{jk}\ket{mn}^{SS'}\otimes\ket{n}^{S''}\notag\\
&=\frac{1}{d}\sum_{m,n=0}^{d-1}\op{jk}{\Phi_{SS'}^{(d)}} \mbb{I}\otimes [\psi]^TU^TW^\dagger_{jk}\ket{mn}^{SS'}\otimes\ket{n}^{S''}\notag\\
&=\frac{1}{d}\ket{jk}\otimes \frac{1}{\sqrt{d}}\sum_{m=0}^{d-1}W_{jk}^*U[\psi]\ket{m}^{S''}\notag\\
&=\frac{1}{d}\ket{jk}\otimes W_{jk}^* U\ket{\psi}^{S''}.
\end{align}
Therefore, after applying the (incoherent) rotation $W_{jk}^*$ on system $S''$, the state $U\ket{\psi}$ is obtained.
\end{proof}

\begin{remark}
\label{remark:unitary1}
Lemma \ref{lem2} can easily be extended to performing controlled unitaries of the form $\sum_{r=1}^R \op{r}{r}^C\otimes U_r$, where $C$ is the control system of dimension $R$.  When each $U_r$ acts on a $d$-dimensional system, the amount of CoBits needed to perform this operation is given by $\lceil \log d\rceil$.  Indeed, the above protocol is repeated with a replacement in Eq. \eqref{Eq:Kraus-Unitary-Implementation} of an incoherent measurement $\{M_{jk}\}_{j,k=0}^{d-1}$ performed on systems $CSS'$ given by
\begin{equation}
M_{jk}=\sum_{r=1}^R \op{r}{r}^C\otimes \ket{jk}\left(\bra{b_{jk}}U_r^{S}\otimes \mbb{I}^{S'} \right).
\end{equation}
\end{remark}
\begin{remark}
\label{remark:unitary2}
Lemma \ref{lem2} offers an alternative protocol for coherence dilution in pure states \cite{Winter-2015a}.  For an arbitrary pure state $\ket{\psi}=\sum_{x=1}^d\sqrt{p(x)}e^{i\theta_x}\ket{x}$, we can transform $\ket{+}^{n}\to\overset{\epsilon}{\approx}\ket{\psi}^{\otimes \lfloor nR\rfloor}$ at any rate $R<S(X)_{\Delta(\psi)}$ as $n\to\infty$.  To see this we consider the $n$-copy decomposition of $\ket{\psi}$ into its typical and atypical parts \cite{Winter-2015a}:
\begin{equation}
\ket{\psi}^{\otimes n}=\sqrt{p^n(T^n_{[p]_\delta})}\ket{\text{typical}}+\sqrt{1-p^n(T^n_{[p]_\delta})}\ket{\text{atypical}}.
\end{equation}
Since $|T^n_{[p]_\delta}|\leq 2^{n[S(X)_{\Delta(\psi)}+\delta]}$ and $p^n(T^n_{[p]_\delta})\to 1$ for arbitrary $\delta>0$ and as $n\to\infty$, dilution is achieved by performing a unitary operation that rotates $\ket{+}^n$ to $\ket{\text{typical}}$.  Since $\ket{\text{typical}}$ is an element in a $|T^n_{[p]_\delta}|$-dimensional space, Lemma \ref{lem2} implies that such a unitary can be implemented by incoherent operations at a coherence consumption rate arbitrary close to $S(X)_{\Delta(\psi)}$.
\end{remark}

\begin{lem1}
\label{lem:majorization}
Let $\ket{\psi}^{AB}$ and $\ket{\phi}^{AB}$ be two bipartite pure states with squared Schmidt coefficients being $\vec{\tau}(\psi)$ and $\vec{\tau}(\phi)$ respectively.  Suppose that Alice and Bob's incoherent bases are Schmidt bases for both $\ket{\psi}^{AB}$ and $\ket{\phi}^{AB}$, and suppose that $\vec{\tau}(\phi)$ majorizes $\vec{\tau}(\psi)$ (i.e. $\vec{\tau}(\psi)\prec \vec{\tau}(\phi)$).  Then there exists an LIOCC protocol transforming $\ket{\psi}^{AB}\to \ket{\phi}^{AB}$ with probability one.
\end{lem1}

\begin{proof}
Recall that a probability distribution $\vec{y}=(y_1,\cdots,y_n)$ majorizes another distribution $\vec{x}=(x_1,\cdots,x_n)$ if $\sum_{j=1}^k y_j^\downarrow\geq\sum_{j=1}^k x_j^\downarrow$ for all $k=1,\cdots,n$, where $y_j^\downarrow$ is the components of $\vec{y}$ in non-increasing order and likewise for $x_j^\downarrow$.  Without loss of generality, suppose that both $\ket{\psi}^{AB}$ and $\ket{\phi}^{AB}$ are maximally correlated (i.e.~have the form $\ket{\psi}=\sum_i \sqrt{\psi_i}\ket{ii}$ and $\ket{\phi}=\sum_i\sqrt{\phi_i}\ket{ii}$).
Pad $\vec{\tau}(\psi)$ with enough zeros so that $\vec{\tau}(\psi)$ and $\vec{\tau}(\phi)$ are real vectors of equal length.  Since $\vec{\tau}(\psi)\prec\vec{\tau}(\phi)$, there exists a doubly stochastic matrix $D$ such that $\vec{\tau}(\psi)=D\vec{\tau}(\phi)$ \cite{Bhatia-2000a}.  Birkhoff's Theorem assures that $D=\sum_\alpha p_\alpha \Pi_\alpha$, where the $p_\alpha$ form a probability distribution and the $\Pi_\alpha$ are permutation matrices.  Then define the operators $M_\alpha:=\sqrt{p_\alpha}\Pi_\alpha^\dagger\bullet S$, where the elements of $S$ are given by $[[S]]_{ij}=\sqrt{\phi_i}/\sqrt{\psi_j}$ and ``$\bullet$'' denotes the Hadamard product.  Recall that the Hadamard product of two matrices $A$ and $B$ is the matrix $A\bullet B$ with elements $[[A\bullet B]]_{ij}=[[A]]_{ij}[[B]]_{ij}$.  Note that each $M_\alpha$ is an incoherent operator.  By construction $M_\alpha\otimes\Pi_\alpha \ket{\psi}\propto\ket{\phi}$ for every $\alpha$, and the relation $\vec{\tau}(\psi)=\sum_\alpha p_\alpha \Pi_\alpha\vec{\tau}(\phi)$ readily implies that $\sum_\alpha M_\alpha^\dagger M_\alpha=\mbb{I}$.  Hence, the protocol consists of Alice performing the incoherent measurement $\{M_\alpha\}_{\alpha}$, announcing her result, and then Bob performing the permutation $\Pi_\alpha$.
\end{proof}

\subsection{Proof of Theorem 4}

\begin{thm1}
The function $C_{\mc{L}}$ is an LIOCC monotone.
\end{thm1}
\begin{proof}
By the convex roof construction, it suffices to prove monotonicity for pure state transformations \cite{Vidal-2000a}.  To do so, we first introduce two relative entropy quantities for a general density matrix $\rho^S$ on system $S$ and a bipartite state $\rho^{AB}$ on joint system $AB$:   $C_r(\rho^S)=\min_{\sigma^S\in\mc{I}}S(\rho^S||\sigma^A)$ \cite{Baumgratz-2014a} and $C_r^{A|B}(\rho^{AB})=\min_{\sigma^{AB}\in\mc{QI}}S(\rho^{AB}||\sigma^{AB})$ \cite{Chitambar-2015a}, where $\mc{I}$ is the set of incoherent states for system $S$ and $\mc{QI}$ is the set of quantum-incoherent states for system $AB$.  For a pure state $\ket{\varphi}^{AB}$ with reduced density matrices $\varphi^A$ and $\varphi^B$, these quantities reduce to $C_r(\varphi^A)=S(A)_{\Delta(\varphi)}-\mathrm{E}(\varphi)$, $C_r(\varphi^B)=S(B)_{\Delta(\varphi)}-\mathrm{E}(\varphi)$, and $C_r^{A|B}(\varphi^{AB})=S(B)_{\Delta(\varphi)}$.  Furthermore, it was shown in Ref. \cite{Chitambar-2015a} that $C_r^{A|B}=C_a^{A|B}$ for pure states, where $C_a^{A|B}(\rho^{AB})$ the optimal asymptotic rate of coherence distillation on Bob's side when Alice helps.  Collecting these observations we therefore obtain
\begin{equation}
C_{\mc{L}}(\varphi^{AB})=C_r^{A|B}(\varphi^{AB})+C_r(\varphi^A)=C_r^{B|A}(\varphi^{AB})+C_r(\varphi^B).
\end{equation}
Now suppose that in the first round of the protocol, Alice makes a local measurement on the joint state $\ket{\Psi}^{AB}$ that generates an ensemble of pure state transformations $\ket{\Psi}^{AB}\to\{\ket{\omega_k}^{AB},p_k\}$.  Since $C_r^{A|B}$ is an LIOCC monotone, we have $C_r^{A|B}(\Psi^{AB})\geq \sum_kp_k C_r^{A|B}(\omega_k^{AB})$, and likewise because $C_r$ (for Alice's system) is a monotone under Alice's incoherent operation, we have $C_r(\Psi^A)\geq\sum_k p_k C_r(\omega_k^A)$.  Hence $C_{\mc{L}}(\Psi^{AB})\geq \sum_k p_k C_{\mc{L}}(\omega_k^{AB})$.  When Bob measures in the next round, we repeat the same argument on each $\omega_k^{AB}$ and use the fact that $C_{\mc{L}}(\omega_k^{AB})=C_r^{B|A}(\omega_k^{AB})+C_r(\omega_k^B)$.  By iteration, $C_{\mc{L}}$ behaves monotonically for all rounds of the protocol, and the theorem is proven.
\end{proof}

\subsection{Proof of Theorem 5}

\begin{thm1}
\label{thm5}
For a pure state $\ket{\Psi}^{AB}$ the following triples are achievable coherence-entanglement distillation rates
\begin{align}
(R_A,R_B,E^{co})&=\left(S(X)_{\Delta(\Psi)}-\mathrm{E}(\Psi),\;S(Y)_{\Delta(\Psi)},\;0\right)\label{Eq:DistTriple1}\\
(R_A,R_B,E^{co})&=\left(0,\;S(Y|X)_{\Delta(\Psi)},\;I(X:Y)_{\Delta(\Psi)}\right)\label{Eq:DistTriple4},
\end{align}
as well as the points obtained by interchanging $A\leftrightarrow B$ in Eqns. \eqref{Eq:DistTriple1} and \eqref{Eq:DistTriple4}.  Moreover, these points are optimal in the sense that any achievable rate triple must satisfy 
\begin{align}
R_A+R_B&\leq C_{\mc{L}}(\Psi)= S(X)_{\Delta(\Psi)}+S(Y)_{\Delta(\Psi)}-\mathrm{E}(\Psi), &R_B+E^{co}&\leq S(Y)_{\Delta(\Psi)}.\label{Eq:DistUPB}
\end{align}
\end{thm1}

\begin{proof}  The upper bound $R_A+R_B\leq C_{\mc{L}}(\Psi)$ follows from monotonicity of $C_{\mc{L}}$ under LIOCC and the fact that $C_{\mc{L}}$ is asymptotically continuous.  The latter property holds because $C_{\mc{L}}$ is defined on pure states in terms of relative entropy measures  and extended to mixed states using a convex roof (see \cite{Synack-2006a}).  The upper bound $R_B+E^{co}\leq S(Y)_{\Delta(\Psi)}$ follows from the first and again the fact that one eCoBit can be transformed into one local CoBit using LIOCC.

Moving to achievability, we first prove the rate triple given in Eq.~\eqref{Eq:DistTriple1}.  The protocol is based on decomposition \eqref{Eq:decomp2b}.  Starting from $\ket{\Psi}^{\otimes n}$ expressed in this form, Alice first measures the typical type encoded in register $A_1$ and (with high probability) announces the result $t$.  This leaves them with the state
\[\frac{1}{\sqrt{M_t(1-\epsilon)}}\sum_{m=1}^{M_t(1-\epsilon)}\ket{m}^{A_2}
\frac{1}{\sqrt{L_tC_t(1-3\epsilon)}}\sum_{l=1}^{L_t(1-3\epsilon)}\sum_{c=1}^{C_t}\left(V_{tm}\ket{lc}^{A_3A_4}\right)W_{tl}^\dagger \left(\Pi_{tlc}\ket{\psi_{tl_0c_0}}^{B}\ket{c}^{B_2}\right).\]
Alice next the performs the incoherent measurement $\{K_{lc}\}_{l,c=1}^{L_t(1-3\epsilon),C_t}$ on $A_3A_4$ where
\begin{equation}
K_{lc}=\op{l0}{l\gamma_c}V_{lm}^\dagger
\end{equation}
where $\ket{\gamma_c}=\sum_{c'=1}^{C_t}e^{i 2\pi \left(\tfrac{(c-1) (c'-1)}{C_t}\right)}\ket{c'}$.  She announces the result $(l,c)$, and then Bob performs the incoherent operation $W_{tl}^\dagger(\Pi_{tlc}^{B}\otimes\mbb{I}^{B_2})$ to systems $BB_2$ followed by the error-correction unitary $U_c=\sum_{c'=1}^{C_t}e^{-i 2\pi \left(\tfrac{(c-1) (c'-1)}{C_t}\right)}\ket{c'}$ performed on $B_2$.  The output state is
\begin{equation}
\frac{1}{\sqrt{M_t(1-\epsilon)}}\sum_{m=1}^{M_t(1-\epsilon)}\ket{m}^{A_2}\sum_{c=1}^{C_t}\ket{c}^{B_2}\ket{\psi_{tl_0c_0}}^{B}.
\end{equation}
Bob can further distill $\ket{\psi_{tl_0c_0}}\to\ket{+}^{rn}$ with $r\to S(Y|X)_{\Delta(\Psi)}$ as $n\to\infty$.  The total rates of coherence distillation thus approach Eq. \eqref{Eq:DistTriple1}.

We next turn to the achievability of rate triple Eq.~\eqref{Eq:DistTriple4}.  It is based on decomposition \eqref{Eq:decomp3c}.  Alice first does a type measurement and with high probability will generate the post-measurement state
\[\frac{1}{\sqrt{L_tC_t(1-3\epsilon)}}\sum_{l=1}^{L_t(1-3\epsilon)}\sum_{c=1}^{C_t}\ket{lc}^{A_2A_3} W_{tl}^\dagger \left(\Pi_{tlc}\ket{\psi_{tl_0c_0}}^{B}\ket{c}^{B_2}\right).\]
Alice then measures the code block $l$ on register $A_2$ and communicates the result to Bob.  He then performs the incoherent unitary $W_{tl}$ with the permutation $\Pi_{tlc}^{-1}$ conditioned on $\ket{c}^{B_2}$.  This generates the state
\begin{equation}
\frac{1}{\sqrt{C_t}}\sum_{c=1}^{C_t}\ket{c}^{A_3}\ket{c}^{B_2}\ket{\psi_{tl_0c_0}}^{B},
\end{equation}
which asymptotically approaches the desired rates of Eq.~\eqref{Eq:DistTriple4}.

\end{proof}

\begin{remark}
As noted in the main text, it is still unknown the optimal rate in which eCoBits can be distilled from a pure state using LIOCC.  Rate triple \eqref{Eq:DistTriple4} gives a rate of $I(X:Y)_{\Delta(\Psi)}$ with an additional coherence output rate of $S(Y|X)_{\Delta(\Psi)}$.  However, this point is not optimal in terms of the eCoBit rate.  The reason is that the quantity $I(X:Y)_{\Delta(\Psi)}$ can be increased by LIOCC.  As an example of this effect, consider the state
\[\ket{\Psi}^{AB}=\frac{1}{\sqrt{6}}\bigg(\ket{0}\otimes(\ket{0}+\ket{1}+\ket{2})+\ket{1}\otimes(\ket{0}-\ket{1}+\ket{2})\bigg).\]
It can be seen that $I(X:Y)_{\Delta(\Psi)}=0$.  However,
when Bob performs the incoherent measurement described by Kraus operators $\{K_0=\op{0}{+}+\op{1}{2},\;K_1=\op{0}{-}\}$, correlations are generated by measuring in the incoherent bases after Bob obtains outcome $K_0$.  Hence, this state has a nonzero eCoBit distillation rate.

We will now show that optimizing the mutual information $I(X:Y)_{\Delta(\Psi)}$ over all LIOCC protocols yields the maximum eCoBit distillation rate $E_{max}^{co}$.

\medskip

\noindent\textbf{Lemma.}
For a pure state $\ket{\Psi}^{AB}$, the optimal distillation rate of $\ket{\Phi_{AB}}$ is given by
\begin{equation}
E^{co}_D(\Psi)=\lim_{n\to\infty}\frac{1}{n}\sup_{\mc{L}}\sum_{m}p(m)I(X:Y)_{\Delta(\Psi_m)},
\end{equation}
where the supremum is taken over all LIOCC protocols that generate the multi-outcome transformation $\ket{\Psi}^{\otimes n} \to \{p(m),\ket{\Psi_m}\}_{m=1}^s$.

\begin{proof}
First let us prove sufficiency.  Consider any LIOCC protocol $\mc{L}$ that generates the pure state transformation $\Psi^{\otimes n}\to\{p(m),\Psi_m\}_{m=1}^s$.  Fix arbitrary $\epsilon,\delta>0$.  We consider $t$ blocks of $\Psi^{\otimes n}$ and perform $\mc{L}$ on each of the blocks. This is a standard technique used in quantum Shannon theory, and is often called ``double blocking''. For $t$ sufficiently large, with probability $>1-\epsilon$ the state obtained is $\bigotimes_{m=1}^s\Psi_m^{\otimes  tN_m}$ with $|N_m-p(m)|<\delta$.  This follows from the definition of $\delta$-typicality and Eq. \eqref{Eq:typicalProb}.  On each $\Psi_m^{\otimes tN_m}$ Alice and Bob perform the distillation protocol of Theorem \ref{thm5}, thus generating the state $\sigma_m$ where $||\sigma_m-\Phi_{AB}^{\otimes \lfloor tN_m (R_m-\epsilon)\rfloor}||<\epsilon$ and $R_m=I(X:Y)_{\Delta(\Psi_m)}$.  Hence in total we have the transformation $\Psi^{\otimes nt}\to\bigotimes_{m=1}^s \sigma_m$
where $||\bigotimes_{m=1}^s\sigma_m-\Phi_{AB}^{\otimes \sum_{m}\lfloor t N_m(R_m-\epsilon)\rfloor}||<s\epsilon$, from which we compute the rate
\begin{equation}
\frac{1}{nt} \sum_{m}\lfloor tN_m(R_m-\epsilon)\rfloor\geq\frac{1}{n}\sum_m p(m)I(X:Y)_{\Delta(\Psi}-O\left(\frac{\delta+\epsilon+s}{n}\right),
\end{equation}
where the additional terms come from $N_m> p(m)-\delta$ and the removal of $\lfloor \cdot\rfloor$.

We now turn to the converse.  Consider any LIOCC distillation protocol transforming $\Psi^{\otimes n}\to\sum_m p(m)\rho_m$ such that $F(\sum_m p(m)\rho_m,\Phi_{AB}^{\otimes nR})\geq 1-\epsilon$ (where $\rho_m^{AB}$ need not be pure).  Hence,
\begin{align}
\label{Eq:FidelityAvg}
(1-\epsilon)^2&\leq \sum_mp(m) F(\rho_m,\Phi^{nR}_{AB})^2\notag\\
&\leq \sum_m p(m) F(\Delta(\rho_m),\Delta(\Phi^{nR}_{AB}))^2.
\end{align}
Using Fannes' Inequality (Lemma \ref{lem:Fannes}), monotonicity of the trace norm under CPTP maps, and the relation $F(\rho,\sigma)^2\leq 1-\frac{1}{64}||\rho-\sigma||^4_1$, it is straightforward to show that
\begin{equation}
F(\Delta(\rho_m),\Delta(\Phi^{nR}_{AB}))^2\leq 1-\frac{1}{64}\left(\frac{|I(X:Y)_{\Delta(\rho_m)}-nR|-1}{3n\log d_Ad_B}\right)^4,\notag
\end{equation}
where we have also used the fact that $I(X:Y)_{\Delta(\Phi^{nR}_{AB})}=nR$.  Combining with Eq. \eqref{Eq:FidelityAvg} gives
\begin{align}
1-(1-\epsilon)^2&\geq\sum_mp(m)\frac{1}{64}\left(\frac{|nR-I(X:Y)_{\Delta(\rho_m)}|-1}{3n\log d_Ad_B}\right)^4\notag\\
&\geq \frac{1}{64}\left(\frac{|nR -\sum_mp(m)I(X:Y)_{\Delta(\rho_m)}|-1}{3n\log d_Ad_B}\right)^4.\notag
\end{align}
Therefore, we obtain
\begin{align}
\frac{1}{n}\sum_m p(m)I(X:Y)_{\Delta(\rho_m)}\geq R-3\log d_Ad_B(64[1-(1-\epsilon)])^{1/4}-1/n.\notag
\end{align}
This completes the proof. 
\end{proof}

\end{remark}

\subsection{Proof of Theorem 6}

\begin{thm1}
\begin{equation}
E^{LIOCC}_D(\Psi)=\mathrm{E}(\Psi).
\end{equation}
\end{thm1}

\begin{proof}
The protocol is based on decomposition (\ref{Eq:decomp4d}).  Quite simply, Alice measures the typical type $\ket{t}^{A_1}$ and codebook $\ket{\ovl{l}}^{A_2}$.  With high probability the post-measurement state will take the form
\begin{equation}
\frac{1}{\sqrt{\ovl{C_t}}}\sum_{\ovl{c}=1}^{\ovl{C_t}}\ket{\ovl{c}}^{A_3}\ovl{W}_{t\ovl{l}}^\dagger \left(\Pi_{t\ovl{l}\ovl{c}}\ket{\psi_{t\ovl{l_0}\ovl{c_0}}}^{B}\ket{\ovl{c}}^{B_2}\right).
\end{equation}
This is a maximally entangled state of approaching the desired size of $\ovl{C_t}\to \mathrm{E}(\Psi)$ as $n\to\infty$.
\end{proof}

\subsection{Proof of Theorem 7}

\begin{thm1}
\label{thm7}
A mixed state $\rho^{AB}$ has distillable entanglement if and only if entanglement can be distilled using LIOCC.
\end{thm1}
\begin{proof}
Note that an arbitrary quantum operation can be accomplished by unitary operations and incoherent projective measurements.  Thus, if $\mc{L}$ is a general LOCC operation such that $\mc{L}(\rho^{\otimes n})\approx \Phi_{AB}$, then, because of Lemma \ref{lem2}, there exists an LIOCC operation $\mc{L}_I$ consuming some finite amount of local coherence that transforms $\mc{L}_I(\rho^{\otimes n})=\mc{L}(\rho^{\otimes n})\approx \Phi_{AB}$.  Therefore, to asymptotically distill entanglement from $\rho$ by LIOCC, it suffices for Alice and Bob to first have a sufficient amount of local coherence.  Theorem 2 of Ref. \cite{Chitambar-2015a} implies that local coherence for either Alice or Bob can be distilled from $\rho^{AB}$ using LIOCC whenever $\rho^{AB}$ is entangled (see Remark below).  Hence, Alice and Bob first use $n_A$ copies of $\rho^{AB}$ to distill a sufficient amount of local coherence for Alice and an additional $n_B$ copies to distill sufficient coherence for Bob.  They can then implement $\mc{L}_I$ on $\rho^{\otimes n}$ with high precision, thus generating a close of approximation of $\Phi_{AB}$ using LIOCC operations and $n_A+n_B+n$ copies of $\rho$.
\end{proof}

\begin{remark}
Ref. \cite{Chitambar-2015a} deals with a more general setting in which the assisting party can perform arbitrary quantum operations.  However, the projective POVM described in Theorem 2 of \cite{Chitambar-2015a} can be implemented incoherently.  Indeed, if, say Alice, performs any projective measurement $\{\op{b_k}{b_k}\}_{k=0}^{d-1}$ with the $\ket{b_k}$ being orthonormal, Bob's post-measurement state will be the same if Alice were to instead perform the incoherent projective measurement $\{\op{k}{b_k}\}_{k=0}^{d-1}$.  See also Ref. \cite{Streltsov-2015b}.
\end{remark}

\end{document}